\definecolor{codegray}{gray}{0.95}
\providecommand{\ifstrempty}[3]{\begingroup\edef\temp{#1}\ifx\temp\empty #2\else #3\fi\endgroup}
\providecommand{\ifstrequal}[4]{\def\tempa{#1}\def\tempb{#2}\ifx\tempa\tempb #3\else #4\fi}
\theoremstyle{definition}
\newtheorem{definition}{Definition}
\theoremstyle{plain}
\newtheorem{lemma}{Lemma}
\newtheorem{theorem}{Theorem}
\newcommand{\stgraph}{$s$-$t$ graph\xspace}
\newcommand{\stgraphs}{$s$-$t$ graphs\xspace}
\newcommand{\stwalk}{$s$-$t$ walk\xspace}
\newcommand{\stwalks}{$s$-$t$ walks\xspace}
\newcommand{\stpaths}{$s$-$t$ paths\xspace}
\newcommand{\flowpaths}{\texttt{flowpaths}\xspace}
\newcommand{\highs}{\texttt{HiGHS}\xspace}
\newcommand{\ecoli}{\textbf{ecoli}\xspace}
\newcommand{\complex}{\textbf{complex32}\xspace}
\newcommand{\medium}{\textbf{medium20}\xspace}
\newcommand{\jgi}{\textbf{JGI}\xspace}
\newcommand{\ILPRobust}{MinPathError\xspace}
\newcommand{\ILPLAE}{LeastAbsErrors\xspace}
\newcommand{\ILPMFD}{MinFlowDecomp\xspace}
\newcommand{\ext}[1]{\mathsf{extension}(#1)}
\newcommand{\unmarkedfootnote}[1]{%
  \begingroup
    \let\@makefnmark\relax
    \long\def\@makefntext##1{\parindent 1em\noindent ##1}%
    \footnotetext{#1}%
    \addtocounter{footnote}{-1}
  \endgroup
}
\newcommand{\rev}[1]{\textcolor{black}{#1}}
\title{Fast and Flexible Flow Decompositions in General Graphs\\ via Dominators} 
\author[1]{Francisco Sena}
\author[1]{Alexandru I.~Tomescu}
\affil[1]{Department of Computer Science, University of Helsinki, Finland\\
\texttt{\{alexandru.tomescu,francisco.sena\}@helsinki.fi}} 
\date{} 
\begin{document}

\maketitle
\unmarkedfootnote{Co-funded by the European Union (ERC, SCALEBIO, 101169716). Views and opinions expressed are however those of the author(s) only and do not necessarily reflect those of the European Union or the European Research Council. Neither the European Union nor the granting authority can be held responsible for them. This work has also received funding from the Research Council of Finland grants No.~346968, 358744.\\[0.2cm]\includegraphics[width=4cm]{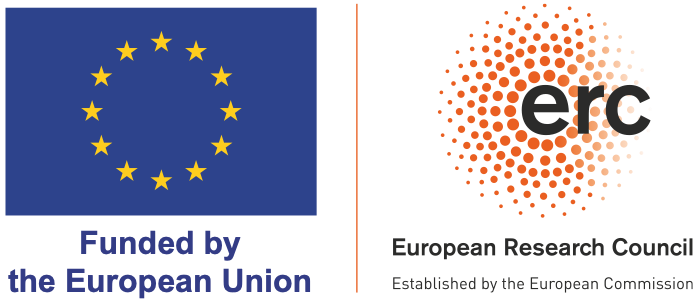}}

\begin{abstract}
Multi-assembly methods rely at their core on a flow decomposition problem, namely, decomposing a weighted graph into weighted paths or walks. However, most results over the past decade have focused on decompositions over directed acyclic graphs (DAGs). This limitation has lead to either purely heuristic methods, or in applications transforming a graph with cycles into a DAG via preprocessing heuristics. In this paper we show that flow decomposition problems can be solved in practice also on general graphs with cycles, via a framework that yields fast and flexible Mixed Integer Linear Programming (MILP) formulations.

Our key technique relies on the graph-theoretic notion of \emph{dominator tree}, which we use to find all \emph{safe sequences of edges}, that are guaranteed to appear in some walk of any flow decomposition. We generalize previous results from DAGs to cyclic graphs, by showing that maximal safe sequences correspond to extensions of common leaves of two dominator trees, and that we can find all of them in time linear in their size.

Using these, we can accelerate MILPs for \emph{any} flow decomposition into walks in general graphs, by setting to (at least) 1 suitable variables encoding solution walks, and by setting to 0 other walks variables non-reachable to and from safe sequences. This reduces model size and eliminates costly linearizations of MILP variable products. 

We experiment with three decomposition models (Minimum Flow Decomposition, Least Absolute Errors and Minimum Path Error), on four bacterial datasets. Our pre-processing enables up to thousand-fold speedups and solves even under 30 seconds many instances otherwise timing out. We thus hope that our dominator-based MILP simplification framework, and the accompanying software library can become building blocks in multi-assembly applications.
\end{abstract}

\medskip
\noindent\textbf{Keywords:} directed graph, walk cover, dominator tree, graph algorithm, graph reachability, integer linear programming, flow decomposition, optimization problem, multi-assembly
\thispagestyle{empty}
\setcounter{page}{0}
\newpage

\section{Introduction}

\paragraph{Background.} Decomposing a weighted graph into weighted paths is a basic computational problem at the core of several types of \emph{multi-assembly problems}, for example RNA transcript assembly~\cite{tomescu2015explaining,translig,cliiq,multitrans,ssp,nsmap,jumper,class2,isoinfer,dias2022fast,dias2023safety,scallop2,WilliamsTM23,isoquant}, metagenomic assembly~\cite{namiki2011metavelvet,peng2011meta,li2015megahit,nurk2017metaspades,kolmogorov2020metaflye,heber2002splicing,vicedomini2021strainberry,sena2024flowtigs} or the assembly of viral strains~\cite{topfer2014viral,baaijens2020strain,baaijens2019full,BeerenwinkelBBD15,luo2022strainline,tian2025accurate,freire2022viquf,luo2023vstrains}. In these kind of problems, multiple (similar) genomic sequences, in different abundances in a sample, need to be recovered from mixed reads sequenced from all of them.

In popular models of this problem, the reads give rise to a weighted graph, whose vertices encode partial genomic sequences, and whose edges encode observed connections between them (e.g.~reads spanning two such genomic sequences). In addition, vertices or edges (or both) have weights corresponding to their read coverage. The genomic sequences to be recovered correspond to \emph{weighted paths} in this graph, the weight being their abundance in the sample. Therefore, to solve the initial multi-assembly problem, one needs to ``optimally'' decompose the weighted graph into weighted paths.

For directed acyclic graphs (DAGs), there is a wide body of work on this problem. For example, various formulations of optimality for a set of weighted paths exist~\cite{tomescu2015explaining,translig,cliiq,multitrans,ssp,nsmap,jumper,class2,isoinfer,dias2022fast,dias2023safety,dias2024robust,stinzendorfer2024robust}. Among these, the most basic is the Minimum Flow Decomposition (MFD) problem for error-free data~\cite{pop2009genome,shao2017theory,kloster2018practical}, where the input is a DAG with a weight on every edge (its \emph{flow value}), satisfying \emph{flow conservation} at every vertex except sources or sinks (i.e.~the sum of flow values on incoming edges equals that on outgoing edges). The MFD problem asks for a minimum number of weighted paths such that the flow value of every edge equals the sum of weights of paths passing through it~\cite{vatinlen2008simple}. MFD is NP-hard~\cite{vatinlen2008simple}, even on simple DAG classes~\cite{hartman2012split,grigorjew2024parameterisedapproximation}, and hence most multi-assembly formulations for real, imperfect data, are NP-hard~\cite{tomescu2015explaining}.

\paragraph{Motivation.} The acyclicity of the graph is a powerful structural property that has been exploited in numerous practical solution for MFD, despite its hardness. For example, on DAGs, MFD admits a fixed-parameter tractable algorithm~\cite{kloster2018practical}, tailored heuristics~\cite{shao2017theory,mumey2015parity,vatinlen2008simple}, approximation algorithms on some classes of DAGs~\cite{caceres2024width}, or integer linear programs (ILPs)~\cite{dias2022fast}, which further admit optimizations to significantly reduce the runtime of ILP solvers~\cite{acceleratingILP}.

The situation is radically different for general graphs that may contain cycles, where there are far fewer methods to approach MFD: two heuristics appear in~\cite{vatinlen2008simple}, an approximation algorithm working on a certain class of graphs was proposed by~\cite{caceres2024width}, and an exact ILP-based method was proposed by~\cite{dias2022minimum}. Moreover, all these work only for perfect data, and thus cannot be applied to real data. In fact, some multi-assembly methods employ various strategies or construction parameters to make a graph built from real data acyclic~\cite{petri2023isonform,baaijens2019full}.

Moreover, we have shown that on DAGs, \emph{any} flow decomposition problem for erroneous data admits fast ILP-based solvers~\cite{sena2025safe}, not only MFD. 
This methodology works by (1) abstracting the decomposition problem as one whose solution consists of a set of source-to-sink paths that cover (a given subset of) the vertices/edges (they form a \emph{path cover} of the DAG); (2) characterizing the sequences that must appear in a path of \emph{every} such path cover (\emph{safe sequences}); (3) developing a linear-time algorithm finding all maximal safe sequences in a DAG, based on the well-known notion of \emph{dominator} (see e.g.~\cite{parotsidis2013dominators}); and (4) reducing the search space of the ILP by fixing to 1 suitable binary variables corresponding to incompatible safe sequences.

\paragraph{Contributions.} In this paper we show that this methodology is not limited to DAGs, but can be adapted also to general graphs that may contain cycles. For (1), we consider decomposition problems whose solution consists of a set of source-to-sink \emph{walks} that cover (a given subset of) the vertices/edges (a \emph{walk cover}--walks may repeat vertices and edges). For (2) and (3) we show that the techniques developed for the DAG case can be extended to the general case of graphs with cycles; we also obtain a linear-time algorithm outputting all maximal safe sequences of walk covers, also based on dominators. For (4) we use the condensation DAG of strongly connected components to find maximal safe sequences that can be guaranteed to appear in different solution walks. Based on them, we use again the condensation DAG to generalize the fixing of walk variables of the ILP. As a novelty, in this paper we also fix to 0 some integer variables corresponding to edges that cannot be traversed by a solution path. Additionally, by this fixing to 0 or 1, we can also avoid the expensive linearization of products between two integer variables, leading to smaller models that are faster to solve.

Furthermore, for (2) and (3), we introduce the notion of \emph{collapsing} paths on the dominator trees; the dominance relation of the graph is invariant under this operation. Importantly, this concept gives a unified view of walk covers where every vertex/edge or only a subset thereof has to be covered. This approach greatly simplifies the techniques developed in~\cite{sena2025safe} (e.g., ``compressing'' the input DAG), and, in fact, those techniques can be seen as a special application of the techniques developed in this paper.

As a base ILP formulation of source-to-sink walks in general graphs, we use the one from~\cite{dias2022minimum}, which was proposed for the MFD problem. Here we slightly simplify it, and we extend it also to decomposition problems applicable to imperfect data. As two concrete ILP models for such data, in this paper we consider the \ILPLAE model~\cite{baaijens2020strain,traph,bernard2014efficient} (where we need to minimize the total sum of the absolute differences between the weight of every edge and the sum of the weights of the walks using it), and a more complex and more accurate ``robust'' model, \ILPRobust~\cite{dias2024robust} (where paths also get a slack variable, whose total sum needs to be minimized, and this absolute difference for every edge must be below the sum of the slacks of the walks passing through it). We define these models formally in~\Cref{sec:notation}. 

Moreover, in applications of flow decomposition problems, more information is often available than just the graph structure and the edge weights. For example, long reads overlapping more than one edge indicate that these edges must co-appear in some solution path or walk~\cite{BeerenwinkelBBD15,scallop2}. For acyclic graphs, these have been modeled previously as \emph{subpath constraints}~\cite{WilliamsTM23,RizziTM14,acceleratingILP,dias2022fast}, namely given paths that must appear as a subpath of at least one solution path of the decomposition problem. For general graphs with cycles, we introduce here a weaker form of this constraint (but which can be directly added to the MILP models above), in the form of a \emph{subset constraint}, namely a set $S$ of edges that must appear in at least one walk of any solution.

We implemented all our ILP models and their optimizations inside the \flowpaths Python package for flow decomposition problems (\url{https://algbio.github.io/flowpaths}). This package uses by default the popular open source MILP solver \highs, see \url{www.highs.dev}, which is installable without a license, thus maximizing the potential for applications in bioinformatics software. The architecture of the \flowpaths package is designed so that the formulation of paths is inside an abstract class, which is then instantiated by any flow decomposition model adding its own problem-specific constraints. We followed the same approach for walks: the optimizations proposed here and the subset constraints are all implemented in an abstract class, and thus applicable to any other decomposition model into walks that might be developed in the future based on \flowpaths. Given the generality of this software design, we focused our experiments on measuring the speed-ups that our optimizations can give on input graphs created from different bacterial datasets. We observe speed-ups that grow with the model complexity (from \ILPMFD, to \ILPLAE, to \ILPRobust), and likewise grow with the number of bacterial genomes that are in the graph. On the hardest datasets that we tested we obtain speed-ups of 393$\times$ for \ILPMFD, 560$\times$ for \ILPLAE, and 1465$\times$ for \ILPRobust, and solving many instances in under 30 seconds on average that would otherwise time out.

\section{Notation and preliminaries}
\label{sec:notation}

\paragraph{Graphs.} An \emph{\stgraph} is a directed graph without parallel edges (but possibly with self-loops), with a unique source $s$ (a vertex with no in-coming edges) and a unique sink $t$ (a vertex with no out-going edges). Let $G = (V,E)$ be an \stgraph. A \emph{walk} in $G$ is a sequence of vertices $v_1v_2\ldots v_t$ such that $v_iv_{i+1} \in E$ for all $1 \leq i < t$; note that a walk can revisit vertices and edges. For designated vertices $u, v \in V$, we say that a walk $W$ in $G$ is an \emph{$u$-$v$ walk} if the first vertex is $u$ and the last vertex is $v$. If the walk does not repeat vertices (and thus also edges), we say that it is a \emph{path}. For an edge $uv \in E$, we denote by $W(uv)$ the number of times the edge $uv$ is traversed by the walk $W$; if $W$ does not traverse $uv$ then we set $W(uv) = 0$. The \emph{distance} between two vertices is the number of edges in a shortest path between them.

\paragraph{Sequences and safety.} Let $C$ be a subset of vertices/edges of $G$. A \emph{$C$-walk cover} of $G$ is a set $P$ of $s$-$t$ walks such that for every vertex/edge $u \in C$ there is a walk in $P$ containing $u$. We denote a \emph{sequence} $X$ through vertices/edges $u_1,\dots,u_\ell$ as $X=u_1,\dots,u_\ell$ or $X=(u_1,\dots,u_\ell)$ if there is a $u_i$-$u_{i+1}$ path for all $1 \le i \le \ell-1$. Any path is a sequence. We say that $X$ \emph{contains} a vertex $v$ if $v=u_i$ for some $i \in \{1, \dots, \ell\}$. The internal vertices of a path are the vertices contained in the path except its first and last vertices. 
Let $X'=v_1,v_2,\dots,v_{\ell'}$ be a sequence. If $X$ can be obtained from $X'$ by deleting any number of vertices/edges then $X$ is a \emph{subsequence} of $X'$ (and $X'$ is a \emph{supersequence} of $X$).
Suppose that there is a path from the last vertex $u_\ell$ of $X$ to the first vertex $v_1$ of $X'$. The \emph{concatenation} of $X$ and $X'$ is the sequence $XX'$ obtained as $X$ followed by $X'$; if $u_\ell=v_1$, then the concatenation is $XX' := u_1,\dots,u_\ell,v_2,\dots,v_{\ell'}$, i.e. the repeated occurrence of $u_\ell=v_1$ is removed. The definitions above can be adapted in the obvious way for sequences of edges.
Let $C$ be a subset of $V$ (resp.~$E$) and $X$ be a sequence of vertices (resp.~edges) of $G$. We say that $X$ is a \emph{$C$-safe sequence} if for any $C$-walk cover $P$ of $G$, there is a walk $W$ in $P$ such that $X$ is a subsequence of $W$ (when clear from the context, we omit the prefix $C$-). See \Cref{fig:safe-sequences-nodes} for an illustration of four safe sequences when $C = V$.

\paragraph{Dominators.} We say that vertex $u$ \emph{$s$-dominates} vertex $v$ if every $s$-$v$ path contains $u$. Every vertex $s$-dominates itself.
Vertex $u$ \emph{strictly $s$-dominates} $v$ if $u$ $s$-dominates $v$ and $u \neq v$. For $v \neq s$, the \emph{immediate $s$-dominator} of $v$ is the vertex $u \neq v$ that $s$-dominates $v$ and is $s$-dominated by all the vertices that strictly $s$-dominate $v$; we write $idom_s(v)=u$.
The domination relation on $G$ with respect to $s$ can be represented as a tree rooted at $s$ with the same vertex set as $G$, called the \emph{$s$-dominator tree} of $G$. In this tree, every vertex has as its parent its immediate $s$-dominator, and every vertex is $s$-dominated by all its ancestors and $s$-dominates all its descendants. A vertex is said to be an \emph{$s$-dominator} if it strictly $s$-dominates some vertex.
For the domination relation with respect to $t$, it is enough to say that a vertex $u$ \emph{$t$-dominates} vertex $v$ if every $v$-$t$ path contains $u$; the remaining definitions are completely analogous. See \Cref{fig:dominator-example} for an illustration and~\cite{parotsidis2013dominators} for a more in-depth presentation of dominators.

It is convenient to consider the dominator trees as directed trees, so that the edges of the $s$-dominator tree point away from the root and the edges of the $t$-dominator tree point towards the root. In this way a path in the $s$-dominator tree goes from root-to-vertex and a path in the $t$-dominator tree from vertex-to-root.
Additionally, as to generalize the immediate-dominance relation, we define a function $dom: V \times \mathbb{N}^+ \to V$ on dominator trees as follows. Let $dom_s(v,k) := idom_s(v)$ when $k=1$ and $dom_s(v,k) := dom_s(dom_s(v,1),k-1)$ when $k>1$. If $k$ is larger than the number of \emph{strict} dominators of a given vertex, then it defaults to $s$ (resp.~$t$) in the $s$-dominator tree (resp. $t$-dominator tree). Essentially, $dom$ gives the $k$-th ancestor of a vertex in a rooted tree, if it is well defined, otherwise it returns the root of the tree.
By $\ext{v}$ we mean the sequence of vertices obtained by concatenating the path from $s$ to $v$ in the $s$-dominator tree with the path from $v$ to $t$ in the $t$-dominator tree. The \emph{depth} of a vertex $v$ in a rooted tree is the distance between $v$ and the root of the tree.


\paragraph{Flow decomposition models.} Next, we formally define the three flow decomposition problem variants that we address in the paper.

\begin{definition}[Flow decomposition variants] 
    \label{def:problems}
    Let $G = (V,E)$ be an \stgraph, let $E' \subseteq E$ be a subset of edges, let $f : E \rightarrow \mathbb{Z}_{\geq0}$ be a function assigning a non-negative integer weight to every edge of $G$, and let $k \geq 1$ be an integer. We define below three problems that require finding $k$ 
    \stwalks $W_1,\dots,W_k$ in $G$, with associated weights $w_1,\dots,w_k \in \mathbb{Z}^{+}$, respectively, with the following objectives:
    \begin{description}
        \item[$k$-Flow Decomposition ($k$-FD) problem] For every edge $uv \in E'$, we have: $\sum_{i=1}^k W_i(uv)w_i = f(uv)$.
        \item[$k$-Least Absolute Errors ($k$-LAE) problem] The walks and their associated weights minimize: \\$\sum_{uv \in E'} \left|f(uv) - \sum_{i=1}^k W_i(uv)w_i\right|$.
        \item[$k$-Min Path Error ($k$-MPE) problem] Find also an associated slack $\rho_i \in \mathbb{Z}_{\geq0}$ for each walk $W_i$, such that: $\displaystyle \left|f(uv) - \sum_{i=1}^k W_i(uv)w_i\right| \leq \sum_{i=1}^{k} W_i(uv)\rho_i$ holds for all $uv \in E'$, and minimizing $\sum_{i=1}^{k} \rho_i$.
    \end{description}
\end{definition}

Note that in the above we assumed that all walks start in the same vertex $s$ and end in the same vertex $t$. However, in practice we might have a set $S$ of vertices where the walks are allowed to start, especially if the graph has more sources, and likewise for a set $T$ where the walks are allowed to end. However, allowing for such $S$-$T$ walks is easy, because we considered the generalization in which the objectives are applied only to the subset of edges $E'$. We can then simply add a new vertex $s$ to such an input, together with edges from $s$ to every vertex in $S$, and a new vertex $t$ together with edges from every vertex in $T$ to $t$. These new edges incident to $s$ or $t$ are not added to $E'$, hence they can get weight 0 and they do not interfere with the problem objectives (which are applied to $E'$).

Due to lack of space, we describe the MILP formulations for the problems from \Cref{def:problems} in \Cref{sec:milp-formulations}. For the rest of the paper, it suffices to know that every walk $W_i$, with $i \in \{1,\dots,k\}$ is modeled by variables $x_{uv,i} \in \mathbb{Z}_+$ for every edge $uv \in E$, such that $x_{uv,i} = W_i(uv)$.

\section{Safe sequences and dominators}

Notice that the flow decomposition models from~\Cref{def:problems} are defined so that their solution walks cover the edges of the graph, and so we are interested in safe sequences for $C \subseteq E$. However, dominator trees are usually defined in terms of vertices. Thus, we state the results in terms of $C \subseteq V$ as they may also be of independent interest outside of bioinformatics applications.
We first solve the case $C=V$ and then show how to generalize for $C \subseteq V$ (the former results serve as preliminary steps toward understanding the more general theorems). Further, we do not give proofs of the statements $C=V$ as they follow as a particular case of the results for $C \subseteq V$.
The case $C \subseteq E$ can be handled in two manners: either (1) introduce a new vertex $v_e$ in the middle of every edge $e \in C$, add all such $v_e$ to $C'$, and compute all $C'$-safe sequences (notice that the new graph has only $|E|$ more vertices and $|E|$ more edges than the original graph); or (2) obtain direct analogues of the vertex-characterizations in terms of edges via the line graph of $G$ and then use dominator trees with respect to the edge-dominance relation for the enumeration algorithm (for a running example of this approach, observe that \Cref{fig:safe-sequences-nodes} is the line graph of \Cref{fig:antichain-2} and compare the safe sequences shown with matching colors, the former figure w.r.t. vertices and the latter w.r.t. edges). The relevant missing proofs from this section can be found in~\Cref{apx:missing-proofs}.

\begin{theorem}\label{thm:safe-sequences-characterization}
    Let $G=(V,E)$ be an $s$-$t$ graph. A sequence $X$ of vertices is safe for walk covers if and only if there exists a vertex $v \in V$ such that $X$ is a subsequence of $\ext{v}$.
\end{theorem}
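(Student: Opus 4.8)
The plan is to first eliminate the quantification over all walk covers. A sequence $X$ is safe for walk covers if and only if there is a vertex $w \in V$ such that every $s$-$t$ walk through $w$ contains $X$ as a subsequence: if such a $w$ exists, every walk cover must contain a walk through $w$ and that walk contains $X$; conversely, if for each vertex $w$ there is an $s$-$t$ walk $W_w$ through $w$ that does not contain $X$ as a subsequence, then $\{W_w : w \in V\}$ is a walk cover witnessing that $X$ is not safe. I would use throughout the standing assumption that every vertex of $G$ lies on some $s$-$t$ path, so that each $W_w$ exists (without it no walk cover exists at all and the statement is vacuous). With this reformulation in hand, it suffices to show that for a fixed vertex $v$ lying on an $s$-$t$ path, ``every $s$-$t$ walk through $v$ contains $X$ as a subsequence'' is equivalent to ``$X$ is a subsequence of $\ext{v}$''; the same vertex $v$ then serves as the required witness in both directions of the theorem.

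For the direction built on $\ext{v}$, I would prove the key claim that every $s$-$t$ walk $W$ through a vertex $v$ in fact contains all of $\ext{v}$ as a subsequence, so that any $X$ that is a subsequence of $\ext{v}$ is automatically a subsequence of $W$. Write the root-to-$v$ path of the $s$-dominator tree as $s = d_0, d_1, \dots, d_r = v$ and the $v$-to-root path of the $t$-dominator tree as $v = e_0, e_1, \dots, e_{r'} = t$, so $\ext{v} = d_0, \dots, d_{r-1}, v, e_1, \dots, e_{r'}$. I would fix the first occurrence of $v$ in $W$ and locate the remaining vertices greedily around it: scanning forward, since $e_j$ strictly $t$-dominates $e_{j-1}$, the sub-walk of $W$ from the already-located occurrence of $e_{j-1}$ to the end is an $e_{j-1}$-$t$ walk, hence contains a simple $e_{j-1}$-$t$ path, hence contains $e_j$ at a strictly later position; scanning backward, symmetrically, the first occurrence of $d_i$ precedes the first occurrence of $d_{i+1}$, because a prefix of $W$ ending at the latter is an $s$-$d_{i+1}$ walk and $d_i$ strictly $s$-dominates $d_{i+1}$. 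Chaining these occurrences exhibits $\ext{v}$ as a subsequence of $W$. The only ingredients are the elementary fact that every walk contains a simple path between its endpoints and the definitions of the two dominator trees.

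The converse is where I expect the real work. Assume every $s$-$t$ walk through $w$ contains $X = u_1, \dots, u_\ell$; I must show $X$ is a subsequence of $\ext{w}$. The first step is that each $u_i$ is an $s$-dominator or a $t$-dominator of $w$: for an arbitrary $s$-$w$ path $R_1$ and $w$-$t$ path $R_2$, their concatenation is an $s$-$t$ walk through $w$ (possibly not simple --- this is exactly why the cyclic case costs nothing over the acyclic one), so it must contain $u_i$; if some $R_1$ avoids $u_i$ then every $R_2$ contains it, meaning $u_i$ $t$-dominates $w$, and otherwise $u_i$ $s$-dominates $w$. Since $w$ lies on a simple $s$-$t$ path, no vertex other than $w$ can both $s$- and $t$-dominate $w$, and $X$ has no repeated vertex (a simple $s$-$t$ path through $w$ is a walk containing $X$, and a simple path cannot contain a sequence with a repeat), so the indices $1, \dots, \ell$ split into the strict $s$-dominators of $w$, at most one index with $u_i = w$, and the strict $t$-dominators of $w$. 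The remaining part --- carrying essentially all the case analysis --- is to pin down the order: testing the containment hypothesis against suitable simple $s$-$t$ paths through $w$, on which a strict $s$-dominator of $w$ must precede $w$, a strict $t$-dominator must follow it, and two $s$-dominators (respectively $t$-dominators) of $w$ appear in their $s$-dominator-tree (respectively $t$-dominator-tree) ancestor order, forces the three groups of indices to occur consecutively in $X$, in the order (strict $s$-dominators of $w$), (possibly $w$), (strict $t$-dominators of $w$), with the first group in $s$-dominator-tree order and the last in $t$-dominator-tree order. This is precisely the assertion that $X$ is a subsequence of $d_0, \dots, d_{r-1}, w, e_1, \dots, e_{r'} = \ext{w}$, which completes the argument. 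Conceptually everything of substance is in the first two steps; the final ordering argument is routine but is where the bulk of the write-up effort will go.
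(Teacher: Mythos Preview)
Your reformulation and the $(\Leftarrow)$ direction are fine. The gap is in the $(\Rightarrow)$ direction, and it sits precisely where cycles make the problem differ from the DAG case.

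Your standing assumption that ``every vertex of $G$ lies on some $s$-$t$ path'' is too strong and its justification is wrong. A walk cover consists of $s$-$t$ \emph{walks}, so a walk cover exists as soon as every vertex lies on some $s$-$t$ walk (equivalently, is reachable from $s$ and reaches $t$); that is the natural standing assumption. In graphs with cycles a vertex can lie on an $s$-$t$ walk without lying on any simple $s$-$t$ path. The paper's running example (\Cref{fig:dominator-example}) exhibits this: $v_{20}$ both strictly $s$-dominates and strictly $t$-dominates $v_{21}$, so every $s$-$t$ walk through $v_{21}$ visits $v_{20}$ both before and after $v_{21}$, and there is no simple $s$-$t$ path through $v_{21}$ at all. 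Consequently your two derived claims --- that no vertex other than $w$ can both $s$- and $t$-dominate $w$, and that $X$ has no repeated vertex --- are false in general: the yellow safe sequence $(v_0, v_{20}, v_{21}, v_{20}, v_{23}, v_{24})$ repeats $v_{20}$, and the paper explicitly remarks that a vertex may appear up to twice in a safe sequence. Your ordering argument then collapses, because it tests containment against ``suitable simple $s$-$t$ paths through $w$'' that need not exist.

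A correct argument must allow a vertex $u \neq w$ to be simultaneously a strict $s$-dominator and a strict $t$-dominator of $w$, and must allow $X$ to list such a vertex twice (once in each role). One route is to argue by contrapositive and \emph{construct} a single $s$-$t$ walk through $v$ as a concatenation of carefully chosen $d_i$--$d_{i+1}$ and $e_j$--$e_{j+1}$ segments, each selected to avoid the ``wrong'' dominators, and then show that this walk fails to contain any $X$ that is not a subsequence of $\ext{v}$. The paper's proof of the general version (\Cref{thm:safe-sequences-characterization-C}) takes exactly this contrapositive stance but compresses the key step into one sentence, so a fully detailed write-up is still yours to supply.
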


By~\Cref{thm:safe-sequences-characterization} it follows that every maximal safe sequence is the extension of some vertex. Thus a graph with $n$ vertices has at most $n$ maximal safe sequences, none of which skips vertices on the corresponding paths in the dominator trees. Notice also that any vertex appears at most twice in any safe sequence $X$, since a vertex appears at most once in the sequence of the $s$-dominators and $t$-dominators of some vertex.

The next lemma was first shown in~\cite{sena2025safe}, and essentially relates the $s$- and $t$-dominance relations. The result is merely technical and is required for the characterization of maximal safe sequences.

\begin{lemma}
\label{lem:st-dom-immediate-relation}
    Let $G=(V,E)$ be an $s$-$t$ graph, let $u,v \in V$ be vertices, and let $k \in \mathbb{N}^+$. If $u$ is the $k$-th ancestor of $v$ in the $s$-dominator tree, then $dom_t(u,k)$ $t$-dominates $v$, and $v$ is not a $t$-dominator of $u$ unless $v=dom_t(u,k)$.
\end{lemma}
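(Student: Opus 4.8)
The plan is to prove the two assertions in sequence. For the first — that $dom_t(u,k)$ $t$-dominates $v$ — I will induct on $k$; the second — that $v$ $t$-dominates $u$ only when $v=dom_t(u,k)$ — I will then deduce by applying the first assertion both to $G$ and to the reverse graph $G^R$ obtained by reversing every edge (so source and sink swap and the $s$- and $t$-dominator trees exchange roles). Throughout I use three standard facts: domination is a property of walks, not just of paths (if $a$ lies on every $b$-$c$ path it lies on every $b$-$c$ walk, since a walk contains a path on a subset of its vertices); the strict $b$-dominators of a vertex are precisely its proper ancestors in the $b$-dominator tree, and $b$-domination is transitive; and $idom_b(c)$ is the \emph{closest} strict $b$-dominator of $c$, so any strict $b$-dominator of $c$ also $b$-dominates $idom_b(c)$. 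As usual I assume every vertex lies on some $s$-$t$ walk, so that both dominator trees span $V$.

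Base case $k=1$: here $u=idom_s(v)$; set $y:=idom_t(u)=dom_t(u,1)$, and note $y\neq u$, since $u$ strictly $s$-dominates $v$ and hence has a successor on some $s$-$v$ path, so $u\neq t$. If $y=v$ there is nothing to prove, so assume $y\neq v$ and, for contradiction, that some $v$-$t$ path $Q$ avoids $y$. Pick an $s$-$v$ path $P$ and split it as $P=P_1P_2$ at the occurrence of $u$ (which is present because $u$ $s$-dominates $v$), with $P_1$ an $s$-$u$ path and $P_2$ a $u$-$v$ path. Then $P_2Q$ is a $u$-$t$ walk, hence meets $y$ (since $y$ $t$-dominates $u$); as $Q$ avoids $y$, the vertex $y$ lies strictly between $u$ and $v$ on $P_2$. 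Now distinguish two cases. If some $u$-$v$ path $R$ avoids $y$, then $RQ$ is a $u$-$t$ walk avoiding $y$, contradicting $y=idom_t(u)$. Otherwise every $u$-$v$ path meets $y$; since every $s$-$v$ path reaches $v$ through a $u$-$v$ subpath, $y$ $s$-dominates $v$, hence (as $y\neq v$) $y$ is a \emph{strict} $s$-dominator of $v$, so $y$ $s$-dominates $u=idom_s(v)$, and since $y\neq u$ it does so strictly; therefore $y$ lies on $P_1$. Splicing the $s$-to-$y$ prefix of $P_1$ with the $y$-to-$v$ suffix of $P_2$ then gives an $s$-$v$ walk omitting $u$ (as $u$ follows $y$ on the simple path $P_1$ and precedes $y$ on the simple path $P_2$), contradicting that $u$ $s$-dominates $v$.

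Inductive step $k\geq 2$: if $u$ has fewer than $k$ strict $t$-dominators then $dom_t(u,k)=t$, which $t$-dominates every vertex, so assume otherwise. Let $v_1:=idom_s(v)$, so that $u$ is the $(k-1)$-th ancestor of $v_1$ in the $s$-dominator tree. By induction $u_{k-1}:=dom_t(u,k-1)$ $t$-dominates $v_1$, and by the base case $idom_t(v_1)$ $t$-dominates $v$. If $u_{k-1}=v_1$ then $dom_t(u,k)=idom_t(u_{k-1})=idom_t(v_1)$ $t$-dominates $v$; otherwise $u_{k-1}$ is a strict $t$-dominator of $v_1$, hence $t$-dominates $idom_t(v_1)$ and so, by transitivity, $v$, and then $dom_t(u,k)=idom_t(u_{k-1})$, being an ancestor of $u_{k-1}$, $t$-dominates $u_{k-1}$ and therefore $v$. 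For the second assertion, suppose $v$ $t$-dominates $u$; since $v\neq u$, $v$ is the $j$-th ancestor of $u$ in the $t$-dominator tree for some $j\geq 1$, i.e.\ $v=dom_t(u,j)$. Applying the first assertion to $G^R$ (with $v$ now the ancestor, at level $j$) yields that $dom_s(v,j)$ $s$-dominates $u$; since $u$ is the $k$-th ancestor of $v$ in the $s$-dominator tree, this forces $j\geq k$, and in particular $u$ has at least $j\geq k$ strict $t$-dominators, so $dom_t(u,k)$ is a genuine vertex of that chain. Applying the first assertion to $G$, $dom_t(u,k)$ $t$-dominates $v=dom_t(u,j)$, which forces $k\geq j$. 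Hence $j=k$ and $v=dom_t(u,k)$.

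The delicate point should be the base case of the first assertion, specifically the sub-case in which no $u$-$v$ path avoids $y$: the contradiction there relies on first forcing $y$ onto the $s$-$u$ portion $P_1$ (because $y$ becomes an $s$-dominator of $v$, hence of $u$), after which an $s$-$v$ walk bypassing $u$ is assembled by splicing at $y$. The induction on $k$ and the reverse-graph symmetry for the second assertion are then routine bookkeeping with the dominator-tree chains.
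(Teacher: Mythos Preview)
Your proof is correct, and it takes a genuinely different route from the paper's.

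The paper argues directly for all $k$ at once: writing $w=dom_t(u,k)$ and $y_1,\dots,y_{k-1}$ for the vertices strictly between $u$ and $v$ in the $s$-dominator tree, and $x_1,\dots,x_{k-1}$ for those strictly between $u$ and $w$ in the $t$-dominator tree, it uses a pigeonhole argument (if $w=y_i$, some $x_j$ falls outside $\{y_1,\dots,y_{k-1}\}$, yielding a $u$-$t$ path missing $x_j$) to force $w\notin\{y_1,\dots,y_{k-1}\}$, and then a short path-concatenation to get that $w$ $t$-dominates $v$. The second assertion is handled by a symmetric pigeonhole with the roles of the $x$'s and $y$'s swapped.

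You instead isolate the $k=1$ case and prove it by a clean path-splicing argument, then build up by induction on $k$ using only transitivity of domination and the chain structure of the dominator trees. For the second assertion you exploit the duality $G\leftrightarrow G^R$ (which swaps the $s$- and $t$-dominator trees) to reuse the first assertion, turning the problem into comparing two depths $j$ and $k$ on the two chains. This buys modularity --- once the $k=1$ splicing lemma is in hand, everything else is routine bookkeeping --- and it makes the symmetry between the two assertions explicit rather than reproving it. The paper's pigeonhole approach is more compact and treats all $k$ uniformly without an induction, but your argument is arguably easier to verify line by line, since it reduces to the conceptually simplest case of immediate dominators. One minor remark: in Case~2 of your base case you could stop slightly earlier --- once you know $y$ lies on both $P_1$ and $P_2$ with $y\neq u$, simplicity of $P$ already gives the contradiction --- but the splicing you wrote is of course also valid.
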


To obtain a characterization of maximal safe sequences on DAGs,~\cite{sena2025safe} assumes the graphs to be ``compressed'', i.e., that every path in the original graph where every vertex but the first has in-degree one and every vertex but the last one has out-degree one is represented as a single vertex. (In bioinformatics these paths are known as ``unitigs'', and \cite{sena2025safe} refers to them as ``unitary paths''.) This causes the characterization of maximal safe sequences with respect to $C$-walk covers to be more involved than necessary.
Instead, we propose to collapse particular paths on the dominator trees, and, as we will show, this operation preserves the set of maximal safe sequences of the graph. Moreover, the paths compressed by~\cite{sena2025safe} always correspond to these particular paths in the dominator trees. The converse, however, does not hold due to the presence of cycles in our graph,\footnote{Notice the path $uv$ in~\Cref{fig:subset-cover}. Another example is the univocal path $abt$ in the graph with edge list $sa,sx,ab,bx,xa,bt$.} and hence our approach indeed generalizes that of~\cite{sena2025safe}.

\begin{definition}[Univocal path\footnote{It is not hard to see that this kind of sequence of dominators indeed form a path of $G$.}]
   Let $G$ be an $s$-$t$ graph. Let $p=v_1 \dots v_k$, $k\geq 1$, be a path in the $t$-dominator tree such that $v_{i+1}$ has a unique child $v_i$ for $i=1,\dots,k-1$. If $v_1 \dots v_k$ is a path in the $s$-dominator tree such that $v_{i}$ has a unique child $v_{i+1}$ for $i=1,\dots,k-1$, then $p$ is a \emph{univocal path} of $G$.
\end{definition}

\begin{figure}[t]
    \begin{subfigure}[m]{0.32\textwidth}
        \centering
        \includegraphics[width=\textwidth]{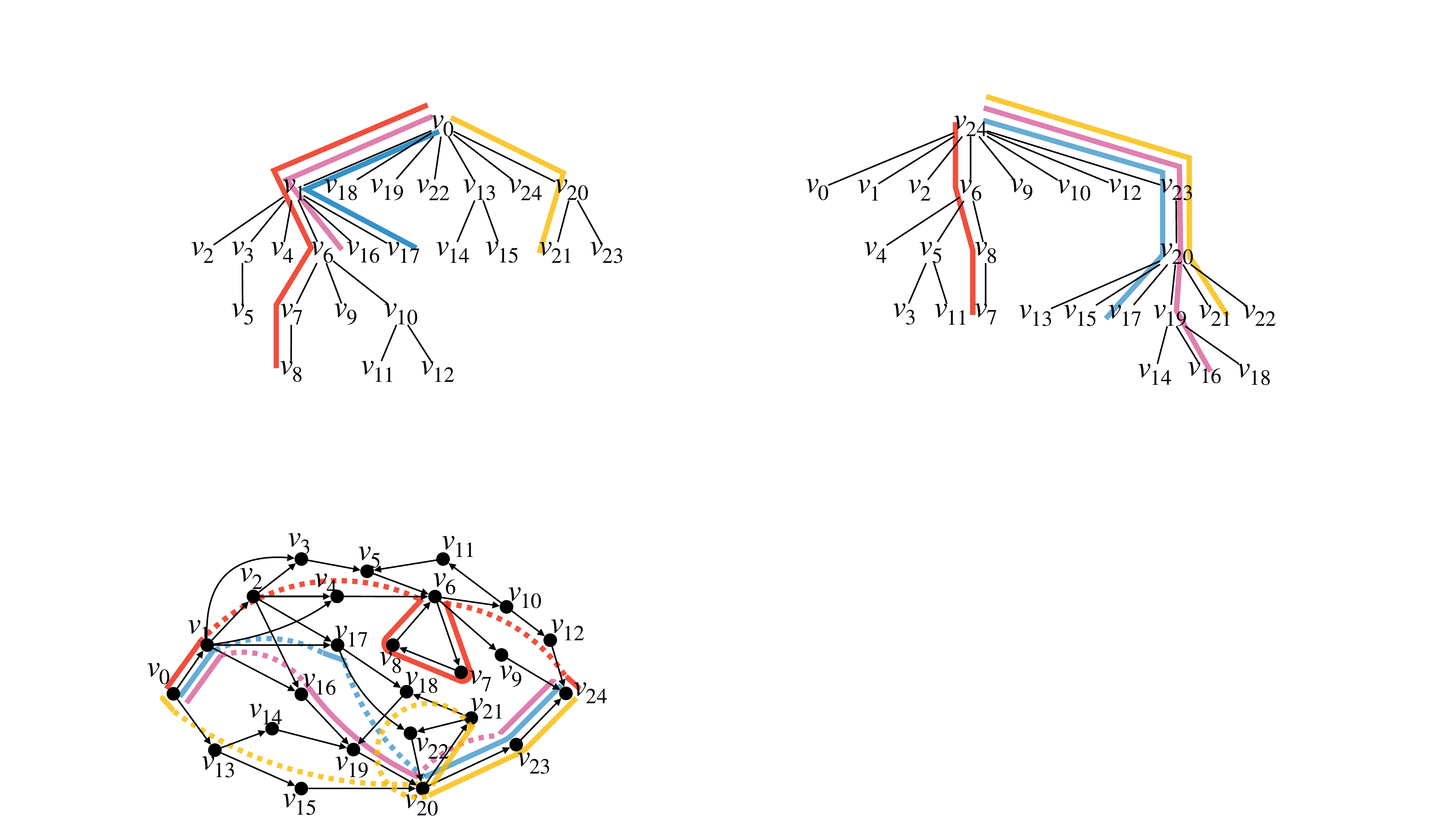}    
        \caption{An \stgraph $G$, $s=v_0$, $t=v_{24}$\label{fig:safe-sequences-nodes}}
    \end{subfigure}
    \hfill
    \begin{subfigure}[m]{0.32\textwidth}
        \centering
        \includegraphics[width=\textwidth]{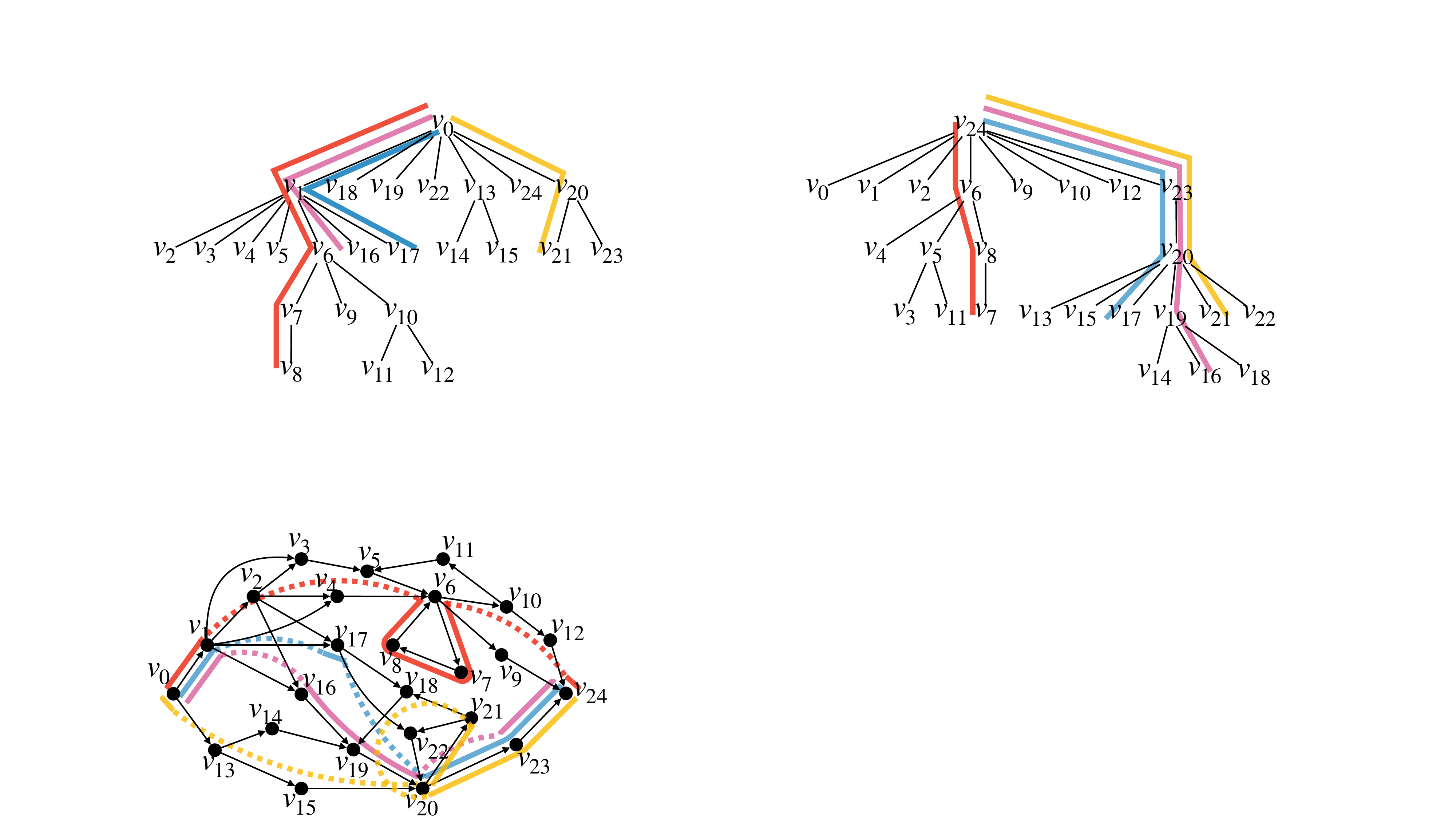}    
        \caption{The $s$-dominator tree of $G$\label{fig:s-dominator-tree}}
    \end{subfigure}
    \hfill
    \begin{subfigure}[m]{0.32\textwidth}
        \centering
        \includegraphics[width=\textwidth]{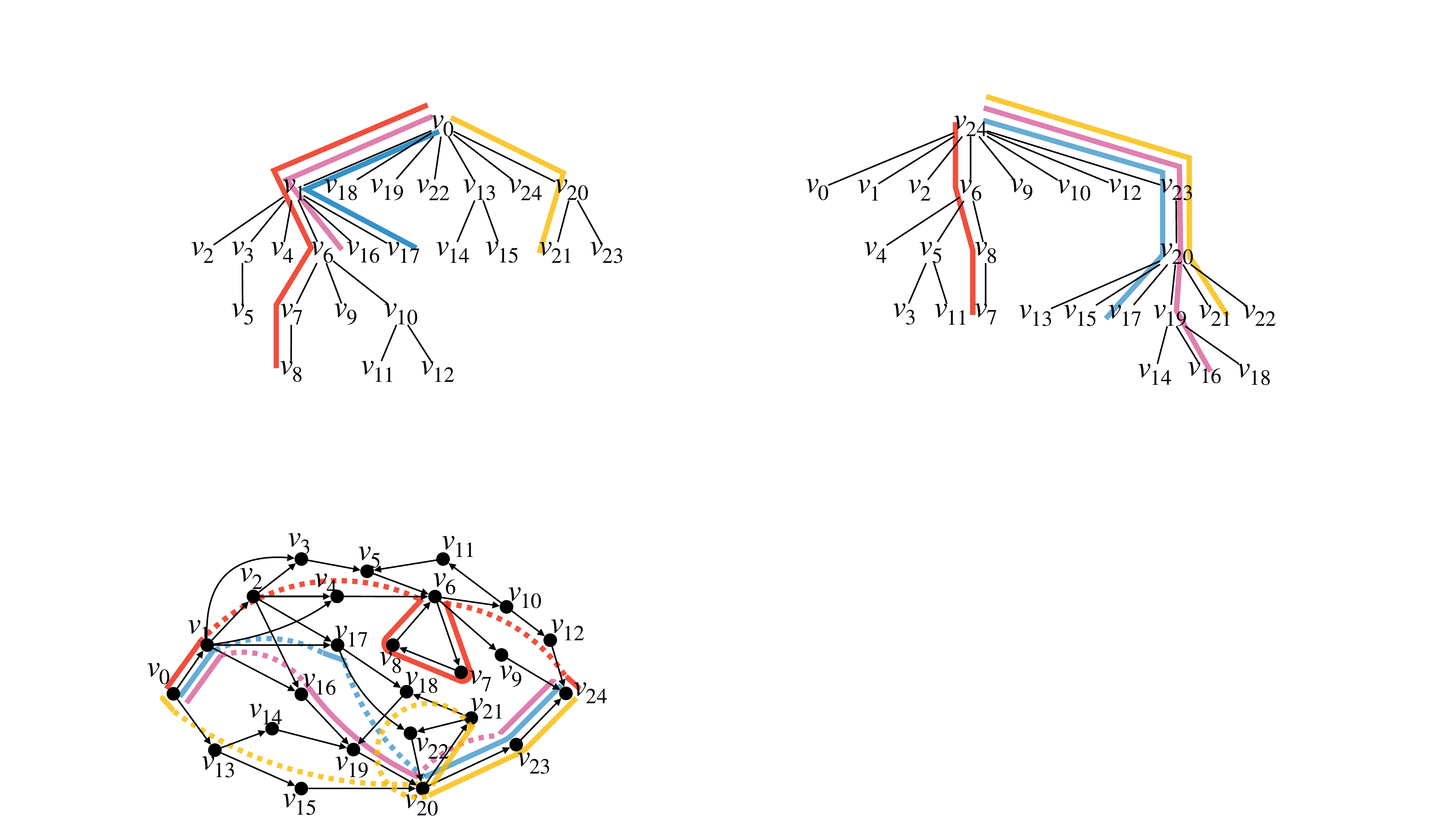}    
        \caption{The $t$-dominator tree of $G$}
    \end{subfigure}
    \caption{Example of a graph $G$, its $s$- and $t$-dominator trees, four $C$-safe sequences in $G$ with $C = V$. For illustration purposes, in a sequence we draw univocal edges between nodes as solid lines, and other connections as dashed: for example, the yellow safe sequence is $(v_0,v_{20},v_{21},v_{20},v_{23},v_{24})$. Because of \Cref{thm:cores-vertices}, maximal safe sequences are obtained by concatenating the paths in the $s$- and $t$-dominator trees: for example, the yellow sequence is obtained by concatenating the path $v_0v_{20}v_{21}$ in the $s$-dominator tree with the path $v_{21}v_{20}v_{23}v_{24}$ in the $t$-dominator tree.
    \label{fig:dominator-example}}
\end{figure}

Let $p=v_1\dots v_k$ be a maximal univocal path of an \stgraph $G$ and let $T_s$ and $T_t$ denote the $s$-dominator and $t$-dominator trees of $G$, respectively $(k\geq 1)$.
By definition of univocal path the extensions of the vertices of $p$ all yield the same sequence.
Importantly, the extension of any vertex $v$ not in $p$ either contains $p$ or does not contain any vertex of $p$: if $v$ is a descendant of $v_1$ in $T_t$ or of $v_k$ in $T_s$ then $\ext{v}$ clearly contains $p$, and if $v$ is not a descendant of $v_1$ in $T_t$ and is not a descendant of $v_k$ in $T_s$, then the lowest common ancestor of $v$ and $v_1$ (resp. $v_k$) in $T_t$ (resp. $T_s$) is a strict ancestor of $v_k$ (resp. $v_1$) in $T_t$ (resp. $T_s$), and thus $\ext{v}$ does not contain any vertex of $p$. Consequently, two vertices belonging to two distinct maximal univocal paths have different extensions.
We can also show that maximal univocal paths are pairwise disjoint, and so the set of maximal univocal paths uniquely partition the dominator trees into vertex-disjoint paths.

\begin{lemma}\label{lem:maximal-univocal-paths-disjoint}
    Let $G$ be an $s$-$t$ graph and let $p_1,p_2$ be two distinct maximal univocal paths. Then $p_1$ and $p_2$ are vertex disjoint.
\end{lemma}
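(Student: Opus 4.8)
\emph{Proof plan.} The plan is to attach to every vertex of $G$ a canonical maximal univocal path and show it is the only one through that vertex; disjointness of two distinct maximal univocal paths then follows immediately, since a shared vertex would force the two paths to coincide. The device for doing this is to replace the global ``path in the two trees'' viewpoint by a purely local, deterministic successor relation on vertices, and to observe that a maximal univocal path is exactly a maximal chain of that relation.

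Concretely, I would first unwind the definition of univocal path into a local criterion. If $v_1\dots v_k$ is a univocal path, then for each $i<k$ the ordered pair $(v_i,v_{i+1})$ satisfies four conditions: in $T_t$, $v_{i+1}=idom_t(v_i)$ and $v_i$ is the \emph{unique} child of $v_{i+1}$ (this is the ``$v_{i+1}$ has a unique child $v_i$'' clause); and in $T_s$, $v_i=idom_s(v_{i+1})$ and $v_{i+1}$ is the \emph{unique} child of $v_i$ (here one uses that $v_i$, not being the deepest vertex of the $T_s$-path $v_k\dots v_1$, has exactly one $T_s$-child, and that $v_{i+1}$ is one of its children). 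Define a partial map $\sigma$ on $V$ by $\sigma(v)=w$ iff these four conditions hold with $v_i=v$ and $v_{i+1}=w$. Since the candidate $w$ is forced to equal $idom_t(v)$, $\sigma$ is well defined; since the conditions also force $v=idom_s(w)$, $\sigma$ is injective, with partial inverse $\pi$. By construction, $u_1\dots u_k$ is a univocal path if and only if $\sigma(u_j)=u_{j+1}$ for all $j<k$. Moreover $\sigma(v)=idom_t(v)$ is the parent of $v$ in $T_t$ and $\pi(v)$, when defined, is a child of $v$ in $T_t$ (because then $v=idom_t(\pi(v))$), so iterating $\sigma$ forward strictly decreases depth in $T_t$ while iterating $\pi$ backward strictly increases it; both iterations are therefore finite. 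Hence from any vertex $u$, following $\pi$ backward and then $\sigma$ forward to exhaustion produces a finite univocal path $q^\star(u)$, and $q^\star(u)$ is maximal: its first vertex has no $\pi$-image and its last vertex has no $\sigma$-image, so by the local criterion nothing can be prepended or appended.

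It then remains to show that $q^\star(u)$ is the \emph{only} maximal univocal path containing $u$. Let $q=u_1\dots u_k$ be any maximal univocal path with $u=u_i$. By the local criterion, $\sigma(u_j)=u_{j+1}$ and $\pi(u_{j+1})=u_j$ for all $j$, so following $\sigma$ forward from $u_i$ retraces $u_{i+1},\dots,u_k$ and must then stop --- if $\sigma(u_k)$ were defined we could append it, contradicting maximality of $q$ --- and symmetrically following $\pi$ backward from $u_i$ retraces $u_{i-1},\dots,u_1$ and stops. Thus $q=q^\star(u)$. Consequently, if $p_1$ and $p_2$ are maximal univocal paths sharing a vertex $u$, then $p_1=q^\star(u)=p_2$; contrapositively, two distinct maximal univocal paths are vertex disjoint.

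The main obstacle I anticipate is not the argument itself but the bookkeeping in setting up $\sigma$: the definition of univocal path is phrased asymmetrically --- the sequence is listed in $T_t$-order, and ``the deepest one'' is meant relative to the appropriate tree --- so one must carefully check that each ``unique child'' clause yields a genuine \emph{uniqueness} statement in the other tree (``$v_{i+1}$ is the unique $T_s$-child of $v_i$'', not merely ``a $T_s$-child of $v_i$''), and that the four conditions defining $\sigma$ are exactly symmetric under swapping the roles of $s$ and $t$ together with reversing the sequence. (Self-loops of $G$ are harmless here, since they do not affect either dominator tree.) Once this translation is pinned down, the determinism and mutual invertibility of $\sigma$ and $\pi$ carry the rest of the proof with no further computation.
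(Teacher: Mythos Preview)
Your proof is correct. The local unwinding of the definition is accurate: for consecutive vertices $v_i,v_{i+1}$ of a univocal path, the four conditions you list are exactly what the two clauses of the definition give (in $T_t$: $v_{i+1}=idom_t(v_i)$ and $v_i$ is its unique child; in $T_s$: $v_i=idom_s(v_{i+1})$ and $v_{i+1}$ is its unique child), and conversely any chain of $\sigma$-successors satisfies both clauses. The determinism of $\sigma$ (forced value $idom_t(v)$) and of its inverse $\pi$ (forced value $idom_s(w)$, equivalently the unique $T_t$-child) then makes the maximal chain through any vertex unique, and the termination argument via $T_t$-depth is sound. The bookkeeping concern you flag is real but you have handled it correctly.

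Your route is genuinely different from the paper's. The paper argues directly by contradiction: assuming a shared vertex $v$, it picks a vertex $u\in p_1\setminus p_2$ adjacent to $v$ along $p_1$, observes that the univocal conditions at the pair $(u,v)$ (unique $T_s$-child, unique $T_t$-child) allow $p_2$ to be extended by $u$, and derives a contradiction to the maximality of $p_2$. Your approach trades this one-step extension argument for a global structural picture: you package the same local conditions into a partial successor map $\sigma$ and show that maximal univocal paths are exactly the maximal $\sigma$-chains, which immediately gives a partition of $V$. The paper's proof is shorter; yours is more informative, since it yields the slightly stronger statement that every vertex lies in a \emph{unique} maximal univocal path (not merely that distinct ones are disjoint), which is in fact what the subsequent collapse construction uses.
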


Collapsing each maximal univocal path into a single vertex then yields two trees with essentially the same set of extensions of the trees before the collapse; we choose the deepest vertex in the $t$-dominator tree to represent the respective collapsed path. To ensure that there is no loss of information during the process, every vertex arising from a collapse operation stores the ordered sequence of vertices of the respective collapsed path (and it suffices to do this in just one of the trees). In this way, whenever an extension passes through a collapsed vertex the sequence stored at that vertex is considered for the extension, thus mimicking the extension in the original trees.
Therefore the set of maximal safe sequences before and after the collapse is identical, i.e., collapsing maximal univocal paths defines an equivalence relation with respect to extensions.

For example, in \Cref{fig:subset-cover}, the paths $uv$ and $bc$ are maximal univocal paths and should be collapsed; the path $d$ is a trivial maximal univocal path. On the other hand, the path $df$ should not be collapsed even though $\ext{d}=\ext{f}$, since otherwise this would mistakenly encode the fact that vertex $f$ $s$-dominates vertex $e$ (note, however, that this collapse preserves the $t$-dominance relation).
After collapse, the path $uv$ in the $t$-dominator tree becomes vertex $u$ with the additional information ``$v$'', and in the $s$-dominator tree the path $vu$ becomes simply vertex $u$.
We are now ready to give a characterization of maximal safe sequences.

\begin{figure}
    \centering
    \includegraphics[width=0.75\linewidth]{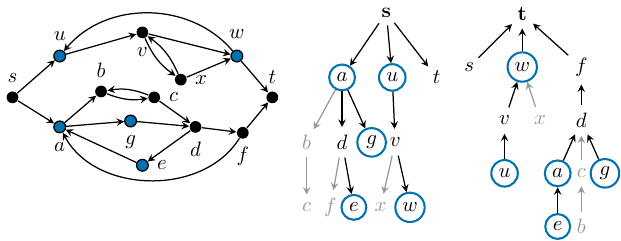}
    \caption{Example of a graph $G$, its blue-dominator trees with respect to $C = \{a,e,g,u,w\}$. Lower opacity vertices are not part of the blue-dominator trees. Underlying the blue-dominator trees are the dominator trees of $G$ for $C=V$. Vertex $e$ is a blue-child of vertex $a$ in the $s$-dominator tree (and vertex $a$ is its closest blue-ancestor). The path $uw$ is maximal $C$-univocal and its collapsed into vertex $u$, which stores the sequence $(v,w)$. The sequence $ea$ is not a $C$-univocal path because of the position of vertex $g$ relative to vertices $a$ and $e$ in the $s$-dominator tree. The maximal $C$-safe sequences are: $\ext{u}=\ext{w}=(s,u,v,w,t)$, $\ext{e}=(s,a,d,e,a,d,f,t)$, and $\ext{g}=(s,a,g,d,f,t)$.}
    \label{fig:subset-cover}
\end{figure}

\begin{theorem}[Characterization of maximal safe sequences]
\label{thm:cores-vertices}
    Let $G=(V,E)$ be an $s$-$t$ graph and suppose that the dominator trees of $G$ are collapsed. Let $u$ be a vertex in the collapsed trees. Then $\ext{u}$ is a maximal safe sequence if and only if $u$ is a leaf in both dominator trees.
\end{theorem}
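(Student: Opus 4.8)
The plan is to prove both directions by leveraging \Cref{thm:safe-sequences-characterization} together with the collapse construction, which (by the discussion preceding the theorem) preserves the set of maximal safe sequences. Throughout, I work in the collapsed trees $T_s$ and $T_t$, keeping in mind that a collapsed vertex stands for a whole maximal univocal path whose stored sequence is spliced back in when forming extensions. The key structural fact I will use repeatedly is the one already established before the theorem: for a nontrivial maximal univocal path $p=u_1\dots u_k$, the extension of any vertex $v$ not on $p$ either contains all of $p$ or contains no vertex of $p$. After collapsing, this says that an extension $\ext{v}$ either passes through the collapsed vertex (and hence, via the stored sequence, contains the entire original univocal path) or avoids it altogether.

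For the ``if'' direction, suppose $u$ is a leaf in both $T_s$ and $T_t$. By \Cref{thm:safe-sequences-characterization}, $\ext{u}$ is safe (this holds in the original trees, and the collapse does not change which sequences are safe). I must show it is maximal, i.e. it is not a proper subsequence of $\ext{v}$ for any other vertex $v$. Since $u$ is a leaf of $T_s$, it has no descendants in $T_s$ other than itself, so the $s$-tree path from $s$ to $u$ is not a prefix of the $s$-tree path to any $v\neq u$; symmetrically, since $u$ is a leaf of $T_t$, the $t$-tree path from $u$ to $t$ is not a suffix of the $t$-tree path from any $v\neq u$. An extension $\ext{v}$ that contained $\ext{u}$ as a proper subsequence would, in particular, have to contain $u$; but then $v$ must be a descendant of $u$ in $T_t$ or a descendant of $u$ in $T_s$ (this is exactly the dichotomy from the univocal-path analysis applied to the collapsed vertex $u$, or, if $u$ is trivial, a direct consequence of how extensions pass through $u$). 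Leaf-ness in both trees forces $v=u$, a contradiction. Hence $\ext{u}$ is maximal.

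For the ``only if'' direction, I argue the contrapositive: if $u$ is not a leaf in at least one of the two trees, then $\ext{u}$ is not a maximal safe sequence. Say $u$ is not a leaf of $T_t$ (the $T_s$ case is symmetric), and let $c$ be a child of $u$ in $T_t$. The crucial point here is that after collapsing, $u$ has at least two children in $T_t$ — if it had exactly one child, the edge $uc$ would itself be part of a univocal path (or would have already been collapsed), contradicting that the trees are fully collapsed; this is where the precise definition of univocal path and the completeness of the collapse must be invoked carefully. Pick two distinct children $c_1,c_2$ of $u$ in $T_t$. Then $\ext{c_1}$ extends $\ext{u}$ by appending the $t$-dominator information below $u$ on the $c_1$ side, so $\ext{u}$ is a proper subsequence of $\ext{c_1}$, which by \Cref{thm:safe-sequences-characterization} is itself safe; hence $\ext{u}$ is not maximal.

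The main obstacle I anticipate is the ``only if'' direction, and specifically nailing down the claim that in the collapsed trees a non-leaf vertex must have at least two children in the tree where it is a non-leaf. This requires showing that the collapse operation, as defined via maximal univocal paths, really does eliminate every ``forced'' single-child configuration — one has to rule out the possibility that a vertex $u$ has a unique child $c$ in $T_t$ while the corresponding reversed path in $T_s$ fails the single-child condition (so that $uc$ is not part of any univocal path), yet $\ext{u}$ is still maximal. I expect this is handled by observing that in such a case the position of some third vertex $w$ witnesses $c$ being a non-trivial branching in a way that still lets $\ext{c}$ or some other extension strictly dominate $\ext{u}$; making this airtight, and dovetailing it correctly with the running example of \Cref{fig:subset-cover} where $df$ is deliberately not collapsed, is the delicate part. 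The ``if'' direction and the safety (as opposed to maximality) claims should follow routinely from \Cref{thm:safe-sequences-characterization} and the univocal-path dichotomy.
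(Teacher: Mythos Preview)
Your ``if'' direction is essentially correct and matches the paper's (brief) argument.

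The ``only if'' direction has two genuine gaps. First, the claim that a non-leaf of the collapsed $T_t$ must have at least two children is simply false: a vertex $u$ can have a \emph{unique} child $c$ in $T_t$ while $c$ fails to be a child of $u$ in $T_s$ (or $u$ has other children in $T_s$), so that the pair $c,u$ is not univocal and hence survives the collapse. Your own discussion of the $df$ non-collapse in \Cref{fig:subset-cover} already points to this phenomenon. Second, and more seriously, the bare assertion that ``$\ext{c_1}$ extends $\ext{u}$ by appending the $t$-dominator information below $u$'' only accounts for the $t$-part. For $\ext{u}\subseteq\ext{c_1}$ you also need every strict $s$-dominator of $u$ to appear in $\ext{c_1}$, i.e.\ to $s$-dominate $c_1$; there is no a priori reason for this, since $c_1$ is a child of $u$ in $T_t$, not in $T_s$.

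This missing ingredient is precisely \Cref{lem:st-dom-immediate-relation} (in its symmetric form): if $u=idom_t(c_1)$ then $idom_s(u)$ does $s$-dominate $c_1$, \emph{unless} $c_1=idom_s(u)$, in which case one checks that $\ext{c_1}=\ext{u}$ exactly, and a different witness must be found. The paper's proof takes the dual route: it picks a child $v$ of $u$ in $T_s$ (so the $s$-part automatically extends), applies \Cref{lem:st-dom-immediate-relation} to control the $t$-part, and when the exceptional case $v=idom_t(u)$ occurs, uses the collapse hypothesis to produce a second child (in $T_s$ or in $T_t$) and re-applies the lemma. Your two-children intuition is not entirely off---it is what rescues the exceptional case---but it is neither the starting point nor universally true, and the containment of extensions cannot be had without the dominator interplay of \Cref{lem:st-dom-immediate-relation}.
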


Since linear-time algorithms to build dominator trees are known, see e.g.~\cite{alstrup1999dominators}, we get the next result.

\begin{theorem}[Optimal enumeration and representation of safe sequences]
\label{thm:representation-optimal-enumeration}
    Let $G$ be an $s$-$t$ graph with $m$ edges. There is an $O(m+o)$ time algorithm outputting the set of all maximal safe sequences with no duplicates, where $o$ denotes the total length of all the maximal safe sequences.
\end{theorem}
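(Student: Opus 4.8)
The plan is to build the two dominator trees, perform the collapse of maximal univocal paths, and then read off the maximal safe sequences by walking the collapsed trees from each common leaf toward the two roots. By \Cref{thm:cores-vertices}, the maximal safe sequences are exactly the extensions $\ext{u}$ for vertices $u$ that are leaves in both collapsed dominator trees, so correctness of the output set is immediate once the collapse is implemented faithfully; the work is entirely in the time bound.

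First I would compute the $s$-dominator tree $T_s$ and the $t$-dominator tree $T_t$ of $G$ in $O(m)$ time using a known linear-time dominator algorithm (e.g.~\cite{alstrup1999dominators}); note $t$-dominators are just $s$-dominators in the reverse graph with roles of $s,t$ swapped, so one routine suffices. Next I would identify the maximal univocal paths. A vertex $v$ has a unique child in $T_t$ iff its child-count in $T_t$ is $1$, and analogously in $T_s$; both child-counts are computable in $O(n)$ total by a single pass over each tree. A univocal path is a maximal run $v_1\dots v_k$ that is a downward path in $T_t$ in which every vertex except the deepest has exactly one child in $T_t$, and whose reverse $v_k\dots v_1$ is a downward path in $T_s$ in which every vertex except the deepest has exactly one child in $T_s$. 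Since by \Cref{lem:maximal-univocal-paths-disjoint} distinct maximal univocal paths are vertex-disjoint, they partition a subset of $V$, and I can discover them by scanning each vertex once, following the two ``unique-child'' pointers in lockstep, and checking the consistency condition between the two trees; this is $O(n)$ amortized. Collapsing each such path to its $T_t$-deepest vertex, and storing at that vertex the ordered list of vertices it absorbed, takes time linear in the sizes of the collapsed paths, i.e.~$O(n)$ overall; the parent pointers of the collapsed trees are then repaired in $O(n)$.

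With the collapsed trees in hand, I would mark the vertices that are leaves in both $T_s$ and $T_t$; for each such vertex $u$ I produce $\ext{u}$ by concatenating the $T_s$-path from the root $s$ down to $u$ with the $T_t$-path from $u$ up to the root $t$, expanding each collapsed vertex into its stored sequence on the fly. Emitting $\ext{u}$ costs time proportional to $|\ext{u}|$, so the total emission cost is $O(o)$ where $o=\sum_u |\ext{u}|$ over common leaves $u$; since each of the at most $n$ maximal safe sequences has length at most $2n$, this is well-defined, and the grand total is $O(m+o)$. The ``no duplicates'' claim follows because the common leaves are distinct vertices and, by \Cref{thm:cores-vertices} together with the fact (noted after \Cref{thm:safe-sequences-characterization}) that maximal safe sequences are in bijection with such vertices, distinct common leaves give distinct extensions — alternatively, if one worries about the same sequence arising twice, a single pass bucketing by, say, the last non-$t$ vertex of each extension removes duplicates within the same $O(o)$ budget.

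The main obstacle I anticipate is not the dominator computation (which is black-boxed) but arguing carefully that the univocal-path detection and collapse run in $O(n)$ rather than, say, $O(n^2)$ in the worst case: one must ensure that the lockstep traversal of the two trees never re-examines a vertex, which relies precisely on the disjointness guaranteed by \Cref{lem:maximal-univocal-paths-disjoint} and on processing vertices in an order (e.g.~by $T_t$-depth, topmost first) so that each maximal univocal run is scanned exactly once from its top endpoint. A secondary subtlety is bookkeeping the stored sequences so that expanding collapsed vertices during emission is $O(1)$ per original vertex and never duplicates shared suffixes; this is handled by storing each absorbed sequence once and splicing by reference, which is compatible with outputting into a write-only stream of total length $o$.
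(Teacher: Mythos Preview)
Your proposal is correct and follows essentially the same approach as the paper: build both dominator trees in linear time, collapse maximal univocal paths, and emit $\ext{u}$ for every vertex $u$ that is a leaf in both collapsed trees. The paper's own argument (given explicitly only for the more general $C$-version, \Cref{thm:representation-optimal-enumeration-C}) is terser than yours---it simply asserts the preprocessing takes $O(m+n)$ and that distinct common leaves yield distinct extensions because they differ in the extended vertex itself---so your extra implementation detail on lockstep univocal-path detection is sound but not something the paper spells out; your fallback ``bucket by last non-$t$ vertex'' deduplication is unnecessary (and would not serve as a unique key anyway), since the primary argument already matches the paper's.
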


Moreover, \Cref{thm:cores-vertices} also tells us that the dominator trees encode every maximal safe sequence of $G$ via extensions. Hence, dominator trees represent all the maximal safe sequences of $G$ in $O(n)$-space.
We dedicate the rest of this section to generalizing the previous results for walk covers where only a given subset of vertices has to be covered.

\paragraph{$C$-safe sequences.}

The characterization of $C$-safe sequences is completely analogous to \Cref{thm:safe-sequences-characterization}.

\begin{theorem}[Characterization of safe sequences for $C$-walk covers]
\label{thm:safe-sequences-characterization-C}
    Let $G=(V,E)$ be an $s$-$t$ graph, $C\subseteq V$ a set of vertices. A sequence $X$ of vertices is safe for $C$-walk covers if and only if there exists a vertex $v \in C$ such that $X$ is a subsequence of $\ext{v}$.
\end{theorem}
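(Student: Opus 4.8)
The plan is to prove both directions by reducing to a statement about a single vertex, mirroring and slightly refining the argument behind \Cref{thm:safe-sequences-characterization}. Throughout, write $Y \sqsubseteq Z$ for ``$Y$ is a subsequence of $Z$'', and assume as usual that every vertex lies on some $s$-$t$ walk. The core fact I would establish first is: for every vertex $v$ and every $s$-$t$ walk $W$ through $v$, we have $\ext{v} \sqsubseteq W$. To see this, note that along any $s$-$v$ walk the first occurrences of the successive vertices on the $s$-$v$ path $D$ of the $s$-dominator tree appear in strictly increasing order, since each of these vertices $s$-dominates the next and hence must already have appeared before the first occurrence of the next (any prefix of the walk ending at a vertex contains, as a simple subpath, an $s$-to-that-vertex path, which passes through the dominator). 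Symmetrically, along any $v$-$t$ walk the last occurrences of the vertices on the $v$-$t$ path of the $t$-dominator tree are strictly increasing; since the first occurrence of $v$ in $W$ precedes its last occurrence, splicing these two embeddings at $v$ gives $\ext{v} \sqsubseteq W$. The ``if'' direction is then immediate: if $X \sqsubseteq \ext{v}$ with $v \in C$, then every $C$-walk cover contains a walk $W$ through $v$, so $X \sqsubseteq \ext{v} \sqsubseteq W$ and $X$ is $C$-safe.

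For the ``only if'' direction I would first note that $X$ is $C$-safe if and only if there is a vertex $v \in C$ such that \emph{every} $s$-$t$ walk through $v$ has $X$ as a subsequence: otherwise, choosing for each $v \in C$ some walk $W_v \ni v$ with $X \not\sqsubseteq W_v$ would give a $C$-walk cover in which no walk contains $X$. So fix such a $v \in C$; the goal becomes $X \sqsubseteq \ext{v}$. The key lemma I would prove next characterizes the sequences embedding into \emph{every} $s$-$w$ walk, for any vertex $w$: they are exactly the subsequences of the $s$-$w$ path in the $s$-dominator tree. One inclusion is the core fact above; for the other, if $Y$ embeds into every $s$-$w$ walk then every entry of $Y$ lies on every $s$-$w$ path (hence is an $s$-dominator of $w$), and the entries must appear in dominator-tree order — both facts checked by testing $Y$ against a shortest, hence simple, $s$-$w$ path. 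The mirror statement holds for $v$-$t$ walks and $t$-dominators.

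The last step combines these. Write $\ext{v}$ as the concatenation $D \cdot E''$, where $D$ is the $s$-$v$ path of the $s$-dominator tree and $E''$ is the $v$-$t$ path of the $t$-dominator tree with its first vertex $v$ deleted (so the two pieces share no position). If $X \sqsubseteq D$ or $X \sqsubseteq E''$ we are done, so let $a$ be the length of the longest prefix of $X$ that is $\sqsubseteq D$ (so $a < \ell$) and $c$ the least index with $(u_c,\dots,u_\ell) \sqsubseteq E''$ (so $c > 1$). By the lemma, since $(u_1,\dots,u_{a+1}) \not\sqsubseteq D$ and $(u_{c-1},\dots,u_\ell) \not\sqsubseteq E''$, there are an $s$-$v$ walk $A_0$ with $(u_1,\dots,u_{a+1}) \not\sqsubseteq A_0$ and a $v$-$t$ walk $B_0$ with $(u_{c-1},\dots,u_\ell) \not\sqsubseteq B_0$; truncating, I may assume each visits $v$ exactly once. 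Concatenating $A_0$ with the portion of $B_0$ after $v$ gives an $s$-$t$ walk $W_0$ through $v$, so $X \sqsubseteq W_0$ by the choice of $v$; splitting the embedding at the unique occurrence of $v$ in $W_0$ produces an index $j$ with $(u_1,\dots,u_j) \sqsubseteq A_0$ and $(u_{j+1},\dots,u_\ell) \sqsubseteq B_0$, which force $j \le a$ and $j \ge c-1$, hence $c \le a+1$. Then $(u_1,\dots,u_a) \sqsubseteq D$, and $(u_{a+1},\dots,u_\ell)$, being a suffix of $(u_c,\dots,u_\ell)$, is $\sqsubseteq E''$; concatenating these two embeddings yields $X \sqsubseteq D \cdot E'' = \ext{v}$ with $v \in C$, as required.

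The hard part, I expect, is the handling of the meeting vertex $v$: arranging that $W_0$ passes through $v$ exactly once, legitimately splitting an embedding of $X$ at that single occurrence, and writing $\ext{v}$ as the clean concatenation $D \cdot E''$ so that the two partial embeddings combine without colliding at $v$ (the potential collision is precisely why $v$ is pushed into $D$ rather than into the $t$-part). The converse half of the characterization lemma — the part needing the shortest-path witness to pin down the dominator-tree order — is the other place where care is needed; the remaining manipulations of subsequences and dominators are routine.
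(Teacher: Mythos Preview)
Your high-level strategy matches the paper's two-line proof: for $(\Leftarrow)$ use that every $s$-$t$ walk through $v$ contains $\ext{v}$; for $(\Rightarrow)$ reduce to a single $v\in C$ and argue that if every $s$-$t$ walk through $v$ contains $X$ then $X\sqsubseteq\ext{v}$. The paper simply asserts the contrapositive of this last implication (``we can build a $C$-walk cover by adding an $s$-$t$ walk avoiding $X$ for every vertex in $C$'') without further argument, so your key lemma and the $A_0,B_0$ splitting genuinely add content beyond what the paper spells out.

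There is, however, one gap. Your mirror lemma characterises the subsequences of the full $t$-dominator path $E=(v,e_1,\dots,t)$, not of $E''$, so from $(u_{c-1},\dots,u_\ell)\not\sqsubseteq E''$ you cannot infer a $v$-$t$ walk $B_0$ with $(u_{c-1},\dots,u_\ell)\not\sqsubseteq B_0$. Concretely, if $u_{c-1}=v$ then $(u_{c-1},\dots,u_\ell)\sqsubseteq E$ (because $(u_c,\dots,u_\ell)\sqsubseteq E''$ by the choice of $c$), and hence it embeds into \emph{every} $v$-$t$ walk; no such $B_0$ exists. The repair is short. First observe that $X$ contains $v$ at most once, since $X$ embeds into an $s$-$t$ walk through $v$ that visits $v$ exactly once. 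Then either (i) define $c$ via $E$ instead of $E''$ --- your splitting argument now yields $(u_1,\dots,u_a)\sqsubseteq D$ and $(u_{a+1},\dots,u_\ell)\sqsubseteq E$, and ``$v$ at most once in $X$'' lets you glue these into an embedding into $\ext{v}$; or (ii) keep your $c$ and handle $u_{c-1}=v$ separately: since then all other $u_i\neq v$, the split of any $W_0$ at its unique $v$ forces $(u_1,\dots,u_{c-2})$ into every $s$-$v$ walk, hence by your key lemma into $D$; as none of $u_1,\dots,u_{c-2}$ equals $v$, that embedding avoids the last position of $D$ and you may append $u_{c-1}=v$ there, giving $a\geq c-1$ directly.
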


Our goal is to establish analogues of \Cref{thm:cores-vertices} and \Cref{thm:representation-optimal-enumeration} for $C$-walk covers.
We begin by describing univocal paths with respect to $C$-walk covers.

Mark with blue the vertices of $C$ in both dominator trees of $G$. A vertex is \emph{blue} if it is marked. A \emph{blue-child} of a vertex $u$ in the $s$-dominator tree is a blue vertex $v$ that is a descendant of $u$ and no internal vertex on the $u$-$v$ path in the $s$-dominator tree is blue (and analogously for the $t$-dominator tree).
By \Cref{thm:safe-sequences-characterization-C}, vertices that do not appear in the extension of any blue vertex can be ignored/removed from the trees. This produces the \emph{blue-dominator trees}, which essentially encode the dominance relation of $G$ restricted to $C$ while maintaining relevant non-blue vertices (i.e., vertices not contained in $C$ that appear in the extension of a blue vertex, see \Cref{fig:subset-cover}).

A $C$-univocal path in the blue-dominator trees generalizes in the natural way: if $v_1\dots v_k$ is a sequence of blue vertices in the blue $t$-dominator tree such that each vertex $v_{i+1}$ has $v_i$ as its \emph{unique} blue-child for $i=1,\dots,k-1$ $(k\geq 1)$, then $v_1,\dots,v_k$ is $C$-univocal if $v_1,\dots,v_k$ is a sequence in the $s$-dominator where $v_{i}$ has $v_{i+1}$ as its \emph{unique} blue-child for $i=1,\dots,k-1$ $(k\geq 1)$.
So \Cref{lem:maximal-univocal-paths-disjoint} generalizes for $C$-univocal paths as well as the fact that collapsing maximal $C$-univocal paths preserves the extensions of the blue-dominator trees.
Further, since \Cref{lem:st-dom-immediate-relation} suitably explains how the $s$- and $t$-dominance relation interact in arbitrary ``ancestry-dominance'' relations, the characterization of maximal $C$-safe sequences is indeed a direct generalization of \Cref{thm:cores-vertices}. (We remark that in order to prove \Cref{thm:cores-vertices} directly, application of \Cref{lem:st-dom-immediate-relation} is required only for $k=1$, i.e., with the immediate-dominance relation.)

\begin{theorem}[Characterization of maximal safe sequences for $C$-walk covers]
\label{thm:cores-vertices-C}
    Let $G=(V,E)$ be an $s$-$t$ graph, $C\subseteq V$, and suppose that the dominator trees of $G$ are collapsed with respect to $C$. Let $u\in C$ be a vertex in the collapsed trees. Then $\ext{u}$ is a maximal $C$-safe sequence if and only if $u$ is a leaf in both dominator trees.
\end{theorem}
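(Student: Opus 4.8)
The plan is to reduce, via \Cref{thm:safe-sequences-characterization-C}, the assertion to a statement purely about extensions of blue vertices, and then establish each implication by contraposition. For the reduction: \Cref{thm:safe-sequences-characterization-C} says every $C$-safe sequence is a subsequence of $\ext{v}$ for some blue vertex $v$, and since each collapsed vertex stores the ordered list of vertices of the $C$-univocal path it represents (so that the extension computed after collapse coincides with the one before), we get that $\ext{u}$ is a maximal $C$-safe sequence if and only if there is no blue vertex $v$ with $\ext{u}$ a proper subsequence of $\ext{v}$. I would also record that every leaf of a collapsed blue-dominator tree is blue: a vertex that strictly $s$-dominates (resp.\ $t$-dominates) a blue vertex has that blue vertex strictly below it and so is not a leaf, while the representative chosen for a collapsed $C$-univocal path is its $T_t$-deepest vertex, which is blue; hence the hypothesis $u \in C$ does not restrict the statement.

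For the ``if'' direction I contrapose the reduction. If $\ext{u}$ is not maximal, pick a blue $v$ with $\ext{u} \subsetneq \ext{v}$. Then $u$ occurs in $\ext{v}$, so $u$ is an $s$-dominator or a $t$-dominator of $v$; as $u \ne v$, this makes $v$ a strict descendant of $u$ in $T_s$ or in $T_t$. Since $\ext{u} \ne \ext{v}$, $v$ cannot lie on $u$'s collapsed $C$-univocal path, so $u$'s collapsed node has a descendant in the corresponding collapsed tree and $u$ is not a leaf there. For the ``only if'' direction, assume $u$ is not a leaf in, say, the collapsed blue $s$-dominator tree, and follow children down to a leaf, obtaining a blue vertex $\ell$ that is a strict $s$-descendant of $q$, the $T_s$-deepest endpoint of $u$'s collapsed path, with $\ell$ not on that path (so $\ext{q} = \ext{u} \ne \ext{\ell}$). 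I then claim $\ext{u} = \ext{q} \subsetneq \ext{\ell}$. The $s$-part of $\ext{q}$ (the $s$-to-$q$ path in $T_s$) is a prefix of the $s$-part of $\ext{\ell}$ because $q$ $s$-dominates $\ell$; the $t$-part of $\ext{q}$ (the $q$-to-$t$ path in $T_t$) is made a subsequence of $\ext{\ell}$ using \Cref{lem:st-dom-immediate-relation} with $k$ the $T_s$-distance from $q$ down to $\ell$, which puts $dom_t(q,k)$ on the $\ell$-to-$t$ path in $T_t$ and, applied along the descent, accounts for the earlier $t$-dominators $dom_t(q,1),\dots,dom_t(q,k-1)$ of $q$. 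Properness follows from the second clause of \Cref{lem:st-dom-immediate-relation} ($\ell$ is not a $t$-dominator of $q$ unless $\ell = dom_t(q,k)$), that degenerate case being excluded since then $q,\dots,\ell$ would extend $u$'s collapsed $C$-univocal path, contradicting its maximality (via the $C$-version of \Cref{lem:maximal-univocal-paths-disjoint}).

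The step I expect to be the main obstacle is exactly the $t$-part of the inclusion $\ext{q} \subseteq \ext{\ell}$ when $\ell$ lies several $T_s$-levels below $q$, the intervening levels being occupied by relevant non-blue vertices --- so the $k = 1$ instance that suffices for the $C = V$ case of \Cref{thm:cores-vertices} is no longer enough. One must show that each of $dom_t(q,1),\dots,dom_t(q,k-1)$ either coincides with one of those relevant non-blue vertices on the $q$-to-$\ell$ path in $T_s$ or is already a $t$-dominator of $\ell$, and that these occurrences appear in $\ext{\ell}$ in the order in which the $t$-part of $\ext{q}$ lists them. This is precisely the ``arbitrary ancestry-dominance'' interaction that \Cref{lem:st-dom-immediate-relation} is meant to control; combined with the fact that removing vertices outside every blue extension and collapsing $C$-univocal paths keeps the remaining structure tight, the rest of the proof is a transcription of the $C = V$ argument.
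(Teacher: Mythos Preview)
Your overall plan and the ``if'' direction are fine and match the paper. The gap is in the ``only if'' direction, specifically in how you exclude the degenerate case.

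First, descending all the way to a leaf $\ell$ is unnecessary and creates problems: between $q$ and $\ell$ in $T_s$ there may well be \emph{other blue vertices} (the collapsed nodes you pass through on the way down), so your description ``the intervening levels being occupied by relevant non-blue vertices'' is generally false. The paper instead takes a single blue step: it picks a \emph{blue-child} $v$ of $u$ in $T_s$, so that the $k$ intermediate $T_s$-levels are genuinely non-blue.

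Second, and more seriously, your handling of the degenerate case $\ell = dom_t(q,k)$ does not go through. You claim this would let $q,\dots,\ell$ extend $u$'s collapsed $C$-univocal path, but $C$-univocality requires that $q$ has $\ell$ as its \emph{unique} blue-$s$-child and $\ell$ has $q$ as its \emph{unique} blue-$t$-child; merely knowing $\ell = dom_t(q,k)$ and $q = dom_s(\ell,k)$ gives neither uniqueness condition. This is precisely where the paper's proof does real work: restricting to a blue-child $v$, if $v = w := dom_t(u,k)$ the paper uses the collapsing hypothesis to conclude that either $u$ has a second blue-$s$-child $v'$ or $v$ has a second blue-$t$-child, and then argues (via a further application of \Cref{lem:st-dom-immediate-relation} and a short reachability argument) that this alternative vertex $z$ satisfies $z \neq dom_t(u,k')$, so that $z \notin \ext{u}$ and hence $\ext{u} \subsetneq \ext{z}$. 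Your proposal skips this case split entirely, and it cannot be recovered by invoking the $C$-version of \Cref{lem:maximal-univocal-paths-disjoint}.

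(A side remark: the containment $\ext{q} \subseteq \ext{\ell}$ whenever $q$ $s$-dominates $\ell$ actually follows cleanly from \Cref{thm:safe-sequences-characterization-C} applied with $C=\{\ell\}$, since any walk through $\ell$ passes through $q$ and hence through all of $\ext{q}$ in order; you do not need to chase $dom_t(q,1),\dots,dom_t(q,k-1)$ individually.)
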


To compute maximal $C$-safe sequences we can proceed essentially as when $C=V$. The description of this procedure is given in the proof of the next theorem. Moreover, this result also implies that blue-dominator trees represent every maximal $C$-safe sequence of $G$ in $O(n)$-space.

\begin{theorem}[Maximal safe sequence enumeration for $C$-walk covers]
\label{thm:representation-optimal-enumeration-C}
    Let $G=(V,E)$ be an $s$-$t$ graph with $m$ edges and let $C \subseteq V$. There is an $O(m+o)$ time algorithm outputting the set of all maximal $C$-safe sequences for $C$-walk covers with no duplicates, where $o$ denotes the total length of all the maximal $C$-safe sequences.
\end{theorem}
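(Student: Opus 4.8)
The plan is to lift the algorithm sketched for the case $C=V$ (underlying Theorem~\ref{thm:representation-optimal-enumeration}) to the setting of an arbitrary target set $C$, using the blue-dominator trees and Theorem~\ref{thm:cores-vertices-C} as the combinatorial backbone. First I would build the $s$- and $t$-dominator trees of $G$ in $O(m)$ time using a linear-time dominator algorithm (e.g.~the one of Alstrup et al.~\cite{alstrup1999dominators}), and mark as blue every vertex of $C$ in both trees. Then I would prune each tree down to its blue-dominator tree: a vertex must be retained iff it lies on the $s$-dominator-tree-path from $s$ to some blue vertex, or on the $t$-dominator-tree-path from some blue vertex to $t$; equivalently, retain $v$ iff the subtree rooted at $v$ contains a blue vertex. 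This is a single postorder pass per tree, so $O(n)$ time, and by Theorem~\ref{thm:safe-sequences-characterization-C} the removed vertices appear in no $C$-safe sequence, so no information is lost.

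Next I would perform the collapse of maximal $C$-univocal paths. Using Lemma~\ref{lem:maximal-univocal-paths-disjoint} (in its $C$-version) the maximal $C$-univocal paths partition a subset of the retained vertices, so the collapse is well defined. Concretely, I would walk the blue $t$-dominator tree and, for each blue vertex $u$ whose set of blue-children has size exactly one, merge $u$ with that unique blue-child, checking simultaneously that the mirrored condition holds in the $s$-dominator tree (each vertex on the corresponding path except the deepest has a unique child); I would maintain at the representative vertex (the deepest one in the $t$-dominator tree) the ordered sequence of original vertices it absorbed, appending as I go. Each vertex is touched a constant number of times plus the cost of copying it into exactly one stored sequence, so the total work here is $O(n + o')$ where $o'$ is the total length of the stored sequences; since each stored sequence is a contiguous chunk of some eventual maximal safe sequence, $o' = O(o)$.

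Finally, by Theorem~\ref{thm:cores-vertices-C}, the maximal $C$-safe sequences are exactly the extensions $\ext{u}$ of vertices $u \in C$ that are leaves in both collapsed dominator trees; I would enumerate such $u$ (a constant-time check per vertex after collapse) and, for each, emit $\ext{u}$ by concatenating the root-path of $u$ in the collapsed $s$-dominator tree with the $u$-to-root path in the collapsed $t$-dominator tree, expanding any collapsed vertex via its stored sequence. Each output sequence of length $\ell$ costs $O(\ell)$ to produce, giving $O(o)$ total; distinctness is immediate because distinct such leaves $u$ give distinct extensions (each $u$ occurs in its own extension but, being a leaf in both trees and by the collapse, in no other emitted extension). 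Summing the phases gives $O(m + o)$.

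The main obstacle is the bookkeeping in the collapse phase: one must verify that the $C$-univocal condition — a condition phrased jointly over the two trees, with the ``unique blue-child in $T_t$'' requirement on one side and ``unique child'' on the corresponding path in $T_s$ on the other — can be checked and acted upon in amortized constant time per merge without ever scanning a path twice, and that the stored-sequence maintenance keeps the correspondence between collapsed extensions and true extensions exact (in particular that a single stored copy in one tree, as remarked after Theorem~\ref{thm:cores-vertices}, suffices). Everything else is a routine adaptation of the $C=V$ argument, so I would keep the proof brief and refer back to Theorems~\ref{thm:cores-vertices-C} and~\ref{thm:safe-sequences-characterization-C} for correctness, spending the detail budget only on the amortized-time analysis of the collapse.
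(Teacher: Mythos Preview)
Your proposal is correct and follows essentially the same approach as the paper: build the dominator trees, prune to blue-dominator trees, collapse $C$-univocal paths while storing the absorbed vertices at the representative, then emit $\ext{u}$ for every common leaf $u$, with distinctness argued via the extended vertex. The paper's own proof is terser (it simply asserts the three preprocessing steps take $O(m+n)$ time), whereas you spell out the postorder pruning and flag the collapse bookkeeping as the only nontrivial implementation point; note that since maximal $C$-univocal paths are vertex-disjoint, the total size of the stored sequences is already $O(n)$, so your $o'$ bound can be tightened and the ``main obstacle'' you identify is milder than you suggest.
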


\section{Simplifying flow decomposition MILP models via safe sequences}
\label{sec:simplifying-milp}

In~\cite{sena2025safe}, maximal safe sequences in acyclic \stgraphs were used to fix MILP variables encoding \stpaths in the following manner. For every edge $e$ of the graph, one stores a weight $w(e)$ computed as the length of a longest safe sequence containing $e$, and stores one such sequence as $S(e)$. Then, one computes an antichain of edges, i.e.~a set of edges $A = \{e_1,\dots,e_t\}$ that are pairwise unreachable, i.e.~there is no path in the graph from $e_i \in A$ to $e_j \in A$ for all distinct $i,j \in \{1,\dots,t\}$. Since the graph is acyclic, we then have that also the sequences $S(e_1),\dots,S(e_t)$ are pairwise unreachable. As such, they must appear in different \stpaths in any solution. Thus, for acyclic graphs, for each edge $e_i = uv \in A$ one can fix $x_{uv,i} = 1$, for all edges $uv$ in sequence $S(e_i)$. For a large number of $x$ variables to be set in this manner,~\cite{sena2025safe} found an antichain with the property that the sum of the lengths of the edges in $A$, i.e.~$\sum_{e \in A} w(e)$, is maximized. This problem of computing a \emph{maximum-weight antichain} can be solved in polynomial time with a reduction to a min-flow problem~\cite{rival2012graphs}.

The reasoning behind such maximum-weight antichain is that (i) safe sequences containing edges of such an antichain must appear in different solution \stpaths (call them \emph{incompatible}), as we also argue below for \emph{\stwalks}; and (ii) maximizing the total weight maximizes the number of path variables $x_{uv,i}$ that can be fixed to (at least) 1 in a preprocessing step.

To adapt this approach to general \stgraphs, possibly with cycles, we first show that we can analogously compute a maximum-weight antichain in a general graph with cycles, with the same complexity as for acyclic graphs (proof in \Cref{apx:missing-proofs}): 

\begin{theorem}
    \label{thm:antichain-reduction}
    Let $G = (V,E)$ be a graph, and let $w : E \rightarrow \mathbb{Z}_{\geq0}$. Computing a maximum-weight antichain of edges of $G$, namely a set $A = \{e_1,\dots,e_t\}$ of pairwise unreachable edges maximizing $\sum_{e \in A} w(e)$, can be reduced in time $O(|V| + |E|)$ to computing a maximum-weight antichain of edges of a DAG $G' = (V',E')$, with edge weights $w' : E' \rightarrow \mathbb{Z}_{\geq0}$, where $|V'| \leq 2|V|$ and $|E'| \leq |E|$.
\end{theorem}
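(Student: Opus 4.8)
The plan is to quotient $G$ by its strongly connected components, which turns edge-reachability in $G$ into a partial order, and then to realize that order as the DAG $G'$. The only subtlety is that an edge $uv$ with $u,v$ in the same component $X$ reaches everything reachable from $X$ and is reached by everything that reaches $X$; to give such an edge a ``home'' as an edge of $G'$, we split $X$ into an entry vertex and an exit vertex.

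Concretely, I would first compute the strongly connected components of $G$ in $O(|V|+|E|)$ time. Call an edge \emph{internal} if its endpoints lie in the same component and \emph{crossing} otherwise, and call a component \emph{solid} if it contains at least one internal edge (a single vertex carrying a self-loop counts). For each solid component $X$ put two vertices $X^-,X^+$ in $V'$ together with a \emph{split edge} $X^-\to X^+$ of weight $\max\{w(e):e\text{ internal to }X\}$; for each non-solid component put a single vertex, identifying $X^-=X^+$. For each crossing edge $uv$ put in $E'$ an edge from $X^+$ to $Y^-$ of weight $w(uv)$, where $X$ and $Y$ are the components of $u$ and $v$ (parallel edges in $G'$ are permitted). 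Ordering the vertices of $G'$ by the topological position of their component in the condensation, with $X^-$ before $X^+$, shows every edge of $G'$ goes ``forward'', so $G'$ is acyclic. There are at most $|V|$ components, each contributing at most two vertices, so $|V'|\le 2|V|$; since internal and crossing edges partition $E$, we charge each split edge to one internal edge of its component and map crossing edges injectively to the remaining edges of $G'$, so $|E'|\le|E|$. All of this is one extra pass, so the reduction runs in $O(|V|+|E|)$ time; translating a solution of the $G'$-instance back will also be trivial.

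The core step is a reachability correspondence. Declare the split edge $X^-X^+$ to be the \emph{image} of every internal edge of $X$, and the copy of a crossing edge $uv$ to be its \emph{image}; this map is onto $E'$, and it identifies exactly the internal edges of each solid component. The key lemma I would prove is: for distinct edges $e_1,e_2$ of $G$, neither of $e_1,e_2$ reaches the other in $G$ if and only if their images are distinct edges of $G'$ and neither reaches the other in $G'$. Both directions reduce to reachability in the condensation: for $e_1=u_1v_1$ and $e_2=u_2v_2$, $e_1$ reaches $e_2$ in $G$ exactly when the component of $v_1$ reaches (or equals) the component of $u_2$, and a short case check over the few possible shapes of an edge of $G'$ shows that the image of $e_1$ reaches the image of $e_2$ in $G'$ under exactly the same condition. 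The point to watch is that a crossing edge lands at the ``$-$'' copy of the target component, from which the split edge (or, for a non-solid component, the identification $X^-=X^+$) lets a walk continue, and it is precisely this that makes the two reachability relations agree. As a consistency check, two distinct internal edges of the same component reach each other in $G$ and also share the same image.

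Given the lemma, the two optimization problems coincide. In a maximum-weight antichain $A$ of $G$, any two internal edges of a common component reach each other, and all internal edges of a fixed component reach, and are reached by, exactly the same edges of $G$; hence we may assume $A$ contains at most one internal edge per component and, whenever it does, the one of largest internal weight. Taking images then yields, by the lemma, an antichain of $G'$ with the same number of edges and the same total weight (its images are distinct, and split-edge weights match the chosen internal edges). Conversely, from an antichain $A'$ of $G'$, replacing each split edge $X^-X^+$ by a maximum-weight internal edge of $X$ and each crossing copy by its original gives, again by the lemma, an antichain of $G$ of the same total weight. Hence the optimal values are equal, which is the claimed reduction. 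The main obstacle is the case analysis inside the key lemma: one must check, without overlooking a case, that crossing edges attaching at the ``$-$'' copies (rather than the ``$+$'' copies) still reproduce $G$-reachability exactly, separately for solid versus non-solid components and for internal versus crossing edges; the size bounds, acyclicity, linear running time, and the weight bookkeeping in the equivalence of the two problems are then routine.
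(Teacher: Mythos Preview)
Your proposal is correct and follows essentially the same construction as the paper: condense $G$, split each nontrivial SCC into an in/out pair joined by an edge carrying the maximum internal weight, and route crossing edges from out-copies to in-copies. The only difference is cosmetic: you keep parallel crossing edges in $G'$, whereas the paper subdivides each crossing edge with a private midpoint $z_{uv}$ to avoid multi-edges; your choice actually makes the size bounds $|V'|\le 2|V|$ and $|E'|\le |E|$ cleaner to verify, and the reachability lemma you isolate is exactly what drives the paper's (less explicitly stated) correctness argument.
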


\begin{figure}[t]
    \begin{subfigure}[m]{0.32\textwidth}
        \centering
        \includegraphics[width=\textwidth]{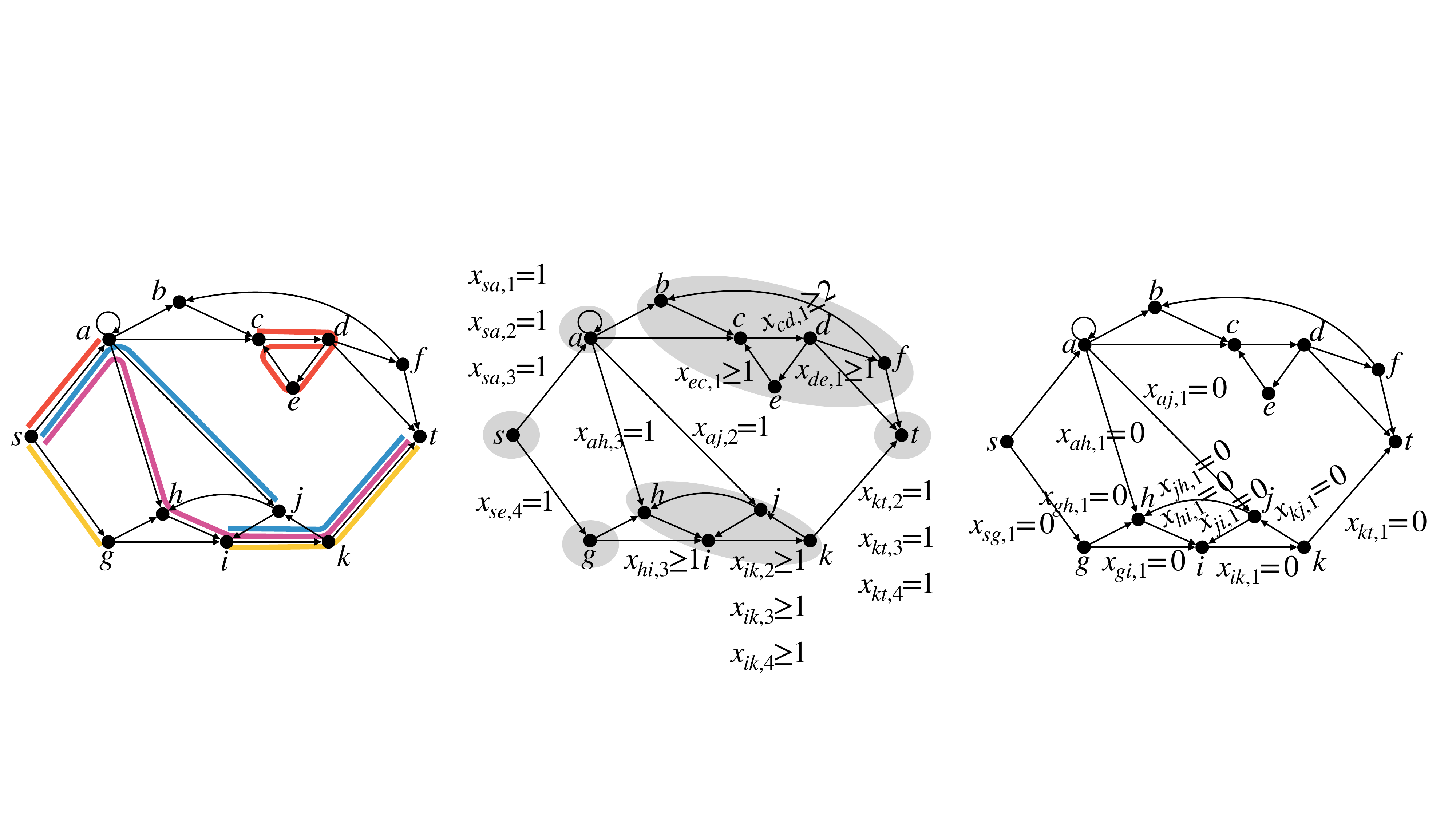}    
        \caption{Four incompatible safe sequences (of edges). They are drawn as contiguous lines for clarity, but they are in fact \textit{sequences} of edges.\label{fig:antichain-2}}
    \end{subfigure}
    \hfill
    \begin{subfigure}[m]{0.32\textwidth}
        \centering
        \includegraphics[width=\textwidth]{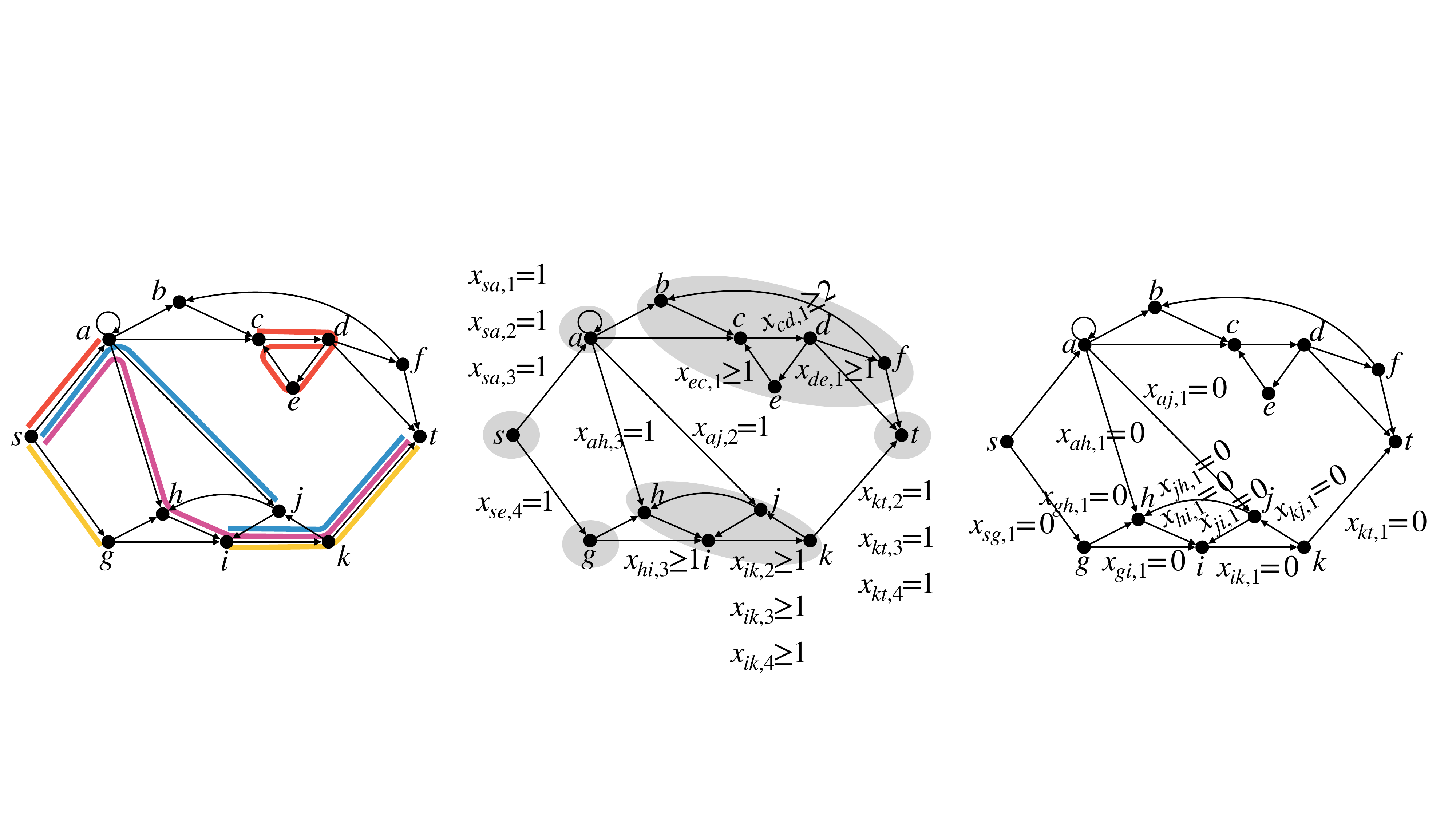}    
        \caption{Fixing walk variables to (at least) 1 using incompatible safe sequences. Strongly connected components are drawn in gray.\label{fig:antichain-fixing-1}}
    \end{subfigure}
    \hfill
    \begin{subfigure}[m]{0.32\textwidth}
        \centering
        \includegraphics[width=\textwidth]{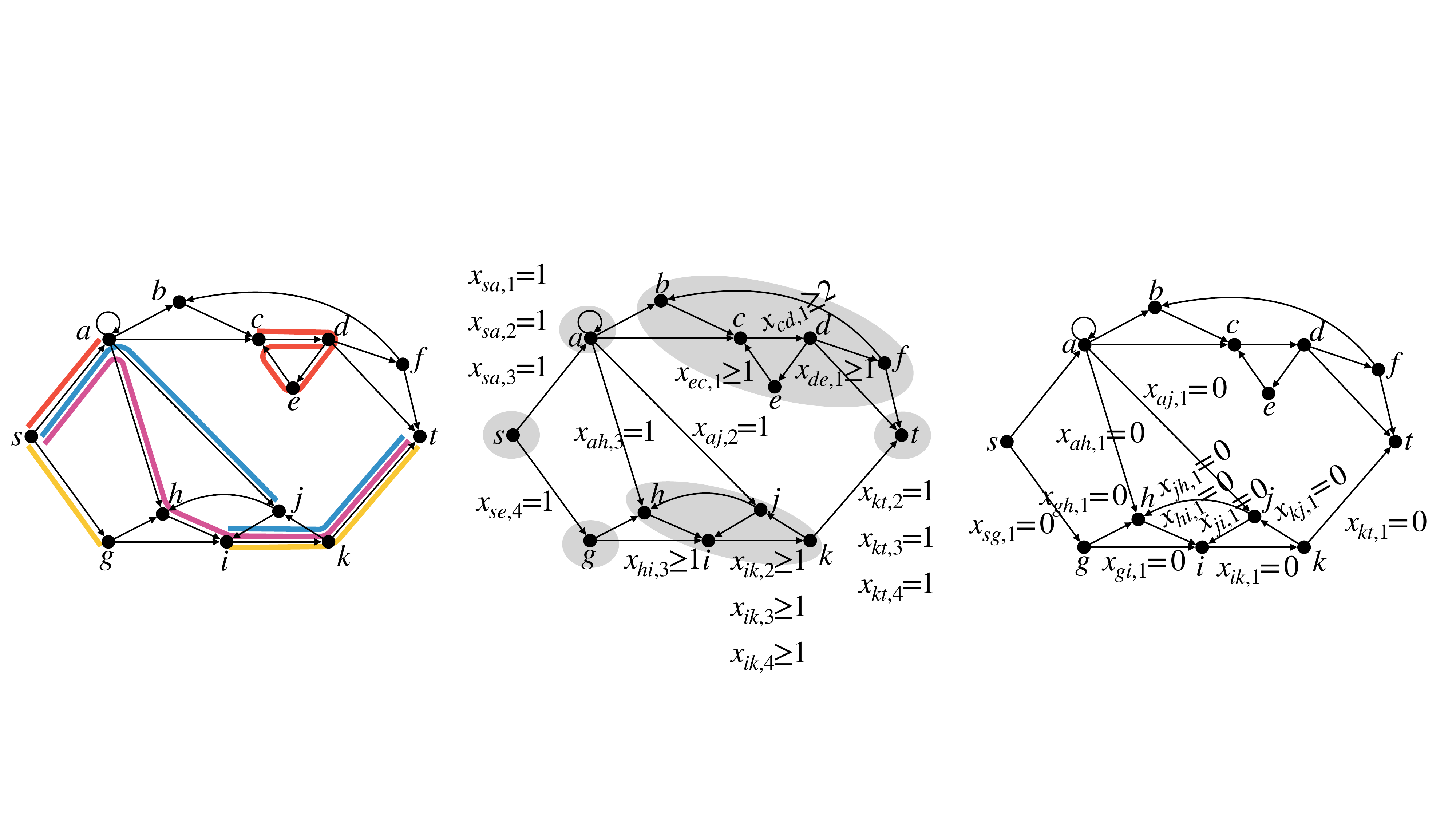}    
        \caption{Fixing walk variables to 0 using the red safe sequence.\label{fig:antichain-fixing-0}}
    \end{subfigure}

    \caption{Fixing walk variables using incompatible safe sequences of edges. For the $i$-th safe sequence (assumed here in the order red, blue, violet, yellow), and for every edge $uv$ in it, we set $x_{uv,i} = 1$ if $uv$ connects different SCCs (it cannot be traversed more than once by the $i$-th walk, since $G^{scc}$ is acyclic). If $uv$ lies within an SCC, we set $x_{uv,i}$ to be at least the number of times $uv$ appears in the sequence (it must be traversed at least this many times, but possibly more). When fixing variables to 0, note that for the red sequence (1st), edge $(j,h)$ does not reach $s$, is not reached by $t$, and while $j$ is reachable from $a$, $h$ does not reach $c$—thus it cannot be used by the first solution walk. 
    \label{fig:antichain-fixing}}
\end{figure}

Next, we show how to use such an antichain $A = \{e_1,\dots,e_t\}$ to fix $x_{uv,i}$ variables encoding \stwalks. Let $S(e_1),\dots,S(e_t)$ be sequences of edges passing through $e_1,\dots,e_t$. We claim that there can be no \stwalk containing two distinct sequences, say $S(e_i)$ and $S(e_j)$. Indeed, if this were so, then both $e_i$ and $e_j$ would appear on a same \stwalk, and thus either $e_i$ reaches $e_j$ or $e_j$ reaches $e_i$, which contradicts the fact that $A$ is an antichain. 

Therefore, we have that any solution to \Cref{def:problems} made up of \stwalk $W_1,\dots,W_k$ is such that: (i) each $S(e_i)$ belongs to a distinct \stwalk; without loss of generality, we can assume that $S(e_i)$ belongs to walk $W_i$; and (ii), as a consequence, $k \geq t$. In analogy to the walk notation, for a sequence $S(e_i)$, let $S(e_i)(uv)$ denote the number of times $uv$ appears in $S(e_i)$. Based on these observations, we can fix the following variables.

\paragraph{Fixing variables to (at least) 1.} We can constrain the walk variables $x_{uv,i}$ of the MILPs by setting (see also \Cref{fig:antichain-fixing-1}):
\vspace{-.2cm}
\begin{align}
    \forall i \in \{1,\dots,t\}, \forall uv \in S(e_i), \quad x_{uv,i} 
    \begin{cases}
    = 1, \text{if $uv$ is an edge between different SCCs of $G$,}\\
    \geq S(e_i)(uv), \text{if $uv$ is an edge inside some SCC of $G$.}\\
    \end{cases}
\end{align}
This is correct because edges between different SCCs cannot be traversed more than once by any \stwalk, while edges inside an SCC can be repeated by a walk. Moreover, we can use this observation also to transform \emph{any} integer variable $x_{uv,i}$ (for all $uv \in E$, and all $i \in \{1,\dots,k\}$) to a binary variable belonging to $\{0,1\}$, whenever $uv$ is an edge between different SCCs of the graph.

\paragraph{Fixing variables to 0.} Once we fix the edges of a safe sequence $S(e_i)$, we can further infer that some other edges $uv$ \emph{cannot} be part of the $i$-th solution walk, in which case we can set $x_{uv,i} = 0$. These are the edges such that \emph{none} of the following conditions hold (see \Cref{fig:antichain-fixing-0} for an illustration): 
\begin{enumerate}
\setlength\itemsep{0pt}
\setlength\parskip{0pt}
\setlength\parsep{0pt}
\item $v$ reaches the first vertex of the safe sequence, i.e.~the $i$-th walk could pass through $uv$ before passing through $S(e_i)$;
\item the last vertex of the safe sequence reaches $u$, i.e.~the $i$-th walk could pass through $uv$ after passing through $S(e_i)$;
\item for two consecutive edges $ab$ and $cd$ in the sequence, $b$ reaches $u$ and $v$ reaches $c$, i.e.~$i$-th walk could pass through $uv$ between two consecutive edges $ab$ and $cd$ in $S(e_i)$.
\end{enumerate}

\section{Experimental results}

\paragraph{Data.} In this paper we experiment with graphs created from four bacterial datasets: \ecoli, containing 30 E.coli strains from the dataset~\cite{ecoli}, \medium and \complex, containing 20 and 32 bacterial genomes, across 19 and 26 genera, respectively, from~\cite{shafranskaya2022metagt}, and \jgi, a mock community containing 23 bacterial strains and 3 archaeal strains, across 22 genera, from~\cite{singer2016next}. For the latter three, we use the genome abundances provided with the datasets. For \ecoli, where there are no abundances, we simulate them following the lognormal distribution with mean 1 and variance 1, multiplied by 10 and rounded up to the nearest integer.

To create datasets of increasing complexity, we create the graphs from an increasing number of genomes ($5, 10, 15, \dots$), until including all genomes in the dataset. Further, to create graphs of a feasible size for the ILP models, we partition these genomes into strings (windows) of a fixed length $\ell$. Because of the dataset complexity, in \medium and \jgi we use $\ell = 50,000$, in \complex we use $\ell = 10,000$, and in \ecoli we use $\ell = 5,000$. For the $i$-th window in each genome, we construct the $i$-th graph as an edge-centric de Bruijn graph, as follows. The nodes are $(k-1)$-mers, and the edges are $k$-mers, for a fixed value $k$. \rev{ 
High values of $k$ make the graphs less tangled, which in principle should accelerate the ILPs. On the other hand, low values of $k$ make the graphs more tangled and should slow down the ILPs. Thus, as middle ground and in line also with other studies such as~\cite{cracco2023extremely}, in all datasets we set $k = 63$.} \rev{Note that using full genomes (and not windows) results in graphs of tens of thousands of nodes (e.g. 45,196 nodes and 61,182 edges on five genomes from \complex), which is currently fully impractical with an ILP solver.} For each occurrence of a $k$-mer in a genome, we increase the flow value of the corresponding edge by the abundance value of the genome. The genome windows then correspond to walks in this graph, and moreover they, with the abundances used to create the graph, form a flow decomposition. In this graph, we compact all unitigs (i.e.~paths whose internal nodes have in-degree and out-degree equal to 1). Finally, we keep only those graphs that contain at least one cycle. 

The above creates perfect edge weights where flow conservation holds at every node. For imperfect data, we follow~\cite{dias2024robust} and replace each error-free edge weight $x$ by a sample from the Poisson distribution $Pois(x)$. To provide subset constraints to the models, when creating the $i$-th graph, we also simulate 5 reads from each genome window, from a location chosen uniformly at random, by reading (at most) 1,000 bases until the end of the genome window. This string is then mapped to a walk in the de Bruijn graph, by traversing each edge in the order of its $k$-mers. All graphs are available at \url{https://doi.org/10.5281/zenodo.17549958}.

\paragraph{Implementation.} We implemented the ILPs for the three problems from \Cref{def:problems}, and described in \Cref{sec:milp-formulations}, in the \flowpaths Python package, see \url{https://algbio.github.io/flowpaths}. This package uses by default the popular \highs MILP solver, see \url{www.highs.dev} and~\cite{huangfu2018parallelizing}, ensuring that our implementations can be immediately used without procuring a license. We used the \highs Python API (v1.11.0) that gets installed automatically with \flowpaths. \rev{Note that \highs currently supports only single-threaded MILP solving, but a multi-threaded version is expected in Summer 2026.\footnote{\url{https://ergo-code.github.io/HiGHS/dev/solvers/}}}
See~\Cref{sec:flowpaths} for minimal code examples of these models. We ran our experiments with version v0.2.12 of \flowpaths, on a machine with an AMD \rev{ThreadRipper PRO 3955WX} processor and 512GB RAM, and with a timeout of 300 seconds for \highs. 

To fix a value of $k$ in~\Cref{def:problems}, for $k$-FD we use the smallest value for which $k$-FD exists, i.e.~we solve the minimum flow decomposition problem (we call the resulting model \ILPMFD). For $k$-LAE and $k$-MPE, $k$ is the minimum number of $s$-$t$ walks needed to cover the edges in $E'$, computed with \Cref{thm:antichain-reduction}. 

The implementations for \ILPMFD, \ILPLAE and \ILPRobust support setting the set of edges $E'$ from~\Cref{def:problems}, support arbitrary sets $S$ and $T$ of start and end nodes for the solution walks (recall the discussion after \Cref{def:problems}), and support subset constraints. For the \ILPMFD and \ILPRobust models, any solution is also a walk cover of the edges in $E'$, by~\Cref{def:problems}. Hence, for them we use safe sequences with respect to walk covers of $C = E'$. For \ILPLAE, we cannot easily guarantee which edges will be traversed by the solution walks. For it, we add to $E'$ only those edges in the subset constraints, since by definition, they must be covered by the solution walks, thus guaranteeing optimality.

\rev{To test different experimental scenarios, for \ILPMFD we use $E'$ as the set of edges with non-zero weight and no subset constraints, for \ILPLAE we use $E' = E$ and the subset constraints from our simulation (and use $C$-safe sequence with $C$ as the edges of the subset constraints). For \ILPRobust, note that all edges with non-zero weight must be covered by some solution walk. However, in practical applications there might be noise in the weights, and for example the edges with low weight might be erroneous and not be part of any solution. To evaluate the behavior of the ILPs in this scenario, we set $E'$ as the set of those edges whose flow value is above the 25-th percentile of all edge values, and no subset constraints. By definition, all edges in this set $E'$ must appear in a solution walk (note that also other edges in $E$ can be used by the solution walks), thus $C$-safe sequences for $C = E'$ still guarantee optimality.}

\begin{table}[t]
\caption{\textbf{Experimental results on perfect data with \ILPMFD.} Column ``\#gen'' contains the number of genomes used to create the graphs, ``\#graphs'' contains the number of graphs (with cycles) obtained from all windows of that number of genomes, ``avg $n$ / $m$'' contains the average number of nodes / edges of the graphs, while in parentheses we list the maximum number of nodes / edges of the graphs. Column ``prep (s)'' contains the average time taken by the safety optimizations described in \Cref{sec:simplifying-milp}, ``\#solved'' contains the number of solved instances within the 300s timeout, ``time (s)'' contains the average time taken by the solved instances (without and with the safety optimizations, from beginning to end), and ``speed-up $(\times$)'' contains the average speed-up obtained with the safety optimizations, where if an instance timed out, we considered it to run in 300s.\label{tab:mfd}}
\centering
\scalebox{0.85}{
\begin{tabular}{|r|r|r|r|r|r|r|r|r|}
\hline
& \multirow{2}{*}{\#gen} 
& \multirow{2}{*}{\#graphs} 
& \multirow{2}{*}{\shortstack{avg $n$\\(max $n$)}} 
& \multirow{2}{*}{\shortstack{avg $m$\\(max $m$)}} 
& \multirow{2}{*}{prep (s)} 
& \multicolumn{2}{c|}{\#solved, time (s)} 
& \multirow{2}{*}{speed-up $(\times$)} \\ \cline{7-8}
& & & & & & no safety & safety & \\ \hline

\multirow{7}{*}{\rotatebox{90}{\footnotesize\textbf{complex32}}}
& 5 & 63 & 26 (144) & 37 (194) & 0.009 & 63, 1.00 & 63, 0.08 & 13.8 \\
& 10 & 162 & 28 (151) & 44 (209) & 0.015 & 162, 4.34 & 162, 0.14 & 28.5 \\
& 15 & 114 & 33 (156) & 54 (219) & 0.024 & 111, 9.13 & 114, 0.20 & 58.7 \\
& 20 & 116 & 42 (226) & 69 (318) & 0.041 & 113, 18.61 & 116, 0.30 & 65.3 \\
& 25 & 118 & 70 (284) & 111 (406) & 0.087 & 106, 42.04 & 118, 0.66 & 94.2 \\
& 30 & 122 & 77 (289) & 124 (416) & 0.106 & 97, 50.26 & 122, 0.83 & 111.3 \\
& 32 & 124 & 80 (291) & 131 (420) & 0.115 & 99, 56.74 & 124, 0.87 & 115.4 \\

\hline

\multirow{6}{*}{\rotatebox{90}{\footnotesize\textbf{ecoli}}}
& 5 & 127 & 11 (108) & 17 (157) & 0.004 & 127, 0.49 & 127, 0.04 & 13.1 \\
& 10 & 209 & 30 (249) & 46 (370) & 0.016 & 208, 7.84 & 209, 0.20 & 55.0 \\
& 15 & 271 & 53 (334) & 82 (502) & 0.036 & 255, 24.97 & 270, 0.45 & 107.5 \\
& 20 & 369 & 84 (398) & 128 (614) & 0.072 & 311, 45.91 & 367, 0.87 & 138.7 \\
& 25 & 436 & 99 (438) & 151 (672) & 0.107 & 318, 54.53 & 432, 2.26 & 143.8 \\
& 30 & 487 & 139 (466) & 210 (721) & 0.194 & 288, 78.27 & 476, 1.99 & 128.6 \\

\hline

\multirow{5}{*}{\rotatebox{90}{\footnotesize\textbf{JGI}}}
& 5 & 47 & 16 (86) & 25 (127) & 0.005 & 47, 1.93 & 47, 0.05 & 20.7 \\
& 10 & 41 & 34 (122) & 52 (175) & 0.019 & 37, 16.53 & 41, 0.17 & 165.8 \\
& 15 & 44 & 41 (132) & 66 (194) & 0.032 & 37, 44.84 & 44, 0.25 & 306.3 \\
& 20 & 42 & 53 (151) & 87 (224) & 0.048 & 30, 76.25 & 42, 0.40 & 345.7 \\
& 26 & 38 & 67 (181) & 111 (272) & 0.081 & 22, 88.11 & 38, 0.68 & 315.8 \\

\hline

\multirow{4}{*}{\rotatebox{90}{\footnotesize\textbf{medium20}}}
& 5 & 52 & 16 (75) & 24 (108) & 0.005 & 52, 2.06 & 52, 0.04 & 27.5 \\
& 10 & 39 & 26 (86) & 42 (131) & 0.015 & 39, 15.52 & 39, 0.11 & 96.6 \\
& 15 & 40 & 36 (113) & 60 (170) & 0.026 & 36, 37.60 & 40, 0.22 & 216.1 \\
& 20 & 40 & 50 (125) & 84 (200) & 0.042 & 28, 76.88 & 40, 0.34 & 393.2 \\

\hline
\end{tabular}}
\end{table}

\paragraph{Discussion.} The experimental results for \ILPMFD are in \Cref{tab:mfd}. The results for \ILPLAE and \ILPRobust are in \Cref{tab:lae,tab:mpe} in \Cref{sec:additional-experiments}. We generally observe that, without optimizations, the running times (and the proportion of instances where the MILP solver times out) increase with the complexity of the MILP models (in the order \ILPMFD, \ILPLAE, \ILPRobust), and with the number of genomes used to create the graphs (column ``\#gen'').

Overall, we observe that the time taken to preprocess the MILP as described in~\Cref{sec:simplifying-milp} is negligible (often less than 0.1s on average), confirming efficiency. Applying these optimizations significantly increases the number of instances solved within the time limit, and presents significant speed-ups. 

For \ILPMFD all instances are solved, except for a small number in the \ecoli dataset. The average running time of the optimized MILPs from beginning to end is generally under 1 second on average, presenting speed-ups of up to 393$\times$ (on \medium, 20 genomes). For \ILPLAE, note that we also add subset constraints, and the edges $E'$ used for safety are only from the subset constraints, and not all edges of the graph. The number of solved instances is less than for \ILPMFD, especially when more genomes are used to construct the graphs. We observe speed-ups of up to 560$\times$ (on \complex, 15 genomes); the speed-up values decrease for graphs constructed from more genomes because the number of instances timing out increases. For \ILPRobust, recall that $E'$ is computed by selecting edges whose flow value is larger than the 25-th percentile. This means that safe sequences are smaller than if setting $E' = E$ (because less edges are guaranteed to appear in a solution walk), but we still observe speedups of more than 1000$\times$ for graphs created from 10 genomes. Overall, our optimized \ILPRobust solves most instances in under one minute on average, and solves most instances within the timeout (except for \ecoli graphs). We also observe more solved instances compared to \ILPLAE, because of the different set $E'$ used to compute safety.

\section{Conclusions}

In this work we generalize the methodology of~\cite{sena2025safe} from DAGs to general graphs, i.e., graphs that contain cycles. To do so, we develop a theory on dominators, relate it to safety on walk-covering problems, and propose the notion of univocal path in the dominator trees. Further, we show that collapsing univocal paths in the dominator trees preserves the dominance relation and yields a natural generalization of the concept of a unitig (in \stgraphs). We also introduce new ILP safety-preprocessing techniques, namely considering the condensation DAG to set variables to at-least-one or to exactly-one, and setting variables to 0 between mutually unreachable safe sequences. Notice that our framework applies to any assembly-like problem whose space of solutions consists of a walk cover of the vertices or edges of the graph, or subsets thereof.

Together, our scaling techniques substantially speed-up the ILP solver in three existing decomposition models and enable many instances to be solved that would otherwise time out. However, such an exact ILP approach at the moment works only for graphs with hundreds of vertices, which is why we had to partition the genomes into windows of suitable lengths. Moreover, even for the graphs constructed in this windowed manner, the experiments show limitations of the framework. For example, on 30 E. coli genomes (where the largest graph has 466 vertices and 721 edges), \ILPLAE already times out for more than half of the 487 graphs (see~\Cref{tab:lae}). Similarly, also \ILPRobust times out for 97 out of 487 graphs (see~\Cref{tab:mpe}).

Overall, this work serves as a stepping stone for solving exact formulations of multi-assembly linear programs that accommodate features observed in real-world data. As it has been done for acyclic graphs, it would be interesting to assess the accuracy of the linear programs considered in this work on general graphs, something that was not possible before due to the intrinsic solving difficulty of these programs.
Our framework allows practitioners to test and fine-tune these models in larger graphs. Our methods are \emph{sound}, in the sense that the space of solutions of the ILPs does not change if given appropriate safety information.

\newpage

\bibliography{refs}

\newpage

\appendix 

\section{MILP formulations for flow decomposition problems into set of weighted walks}
\label{sec:milp-formulations}

In this section we describe the MILP models for the three problems defined in \Cref{def:problems}. These models use the same main ideas as the MILP model introduced in~\cite{dias2022minimum} for the $k$-FD problem in cyclic graphs, but improve upon it by having fewer constraints. The formulation follows the same same general methodology as for flow decomposition in acyclic graphs: first formulate $k$ \stwalks (now walks instead of paths)---as in the first part of \Cref{def:problems}; second, add constraints for each specific problem objective, as in the second part of~\Cref{def:problems}. The main difficulty here with respect to acyclic graphs is to model \stwalks via MILP.

\subsection{Formulation of \stwalks}
\label{sec:formulation-of-walks}

We first introduce suitable variables and constraints that model $k$ \stwalks in an \stgraph. As in~\cite{dias2022minimum}, for each walk $W_i$, and every edge $uv \in E$, we introduce an integer variable $x_{uv,i} \in \mathbb{Z}_{\geq 0}$ representing the number of times the edge $uv$ is traversed by the walk $W_i$. To ensure that the variables $x_{uv,i}$ correctly model a walk, we use the following lemma from \cite{dias2022minimum}. 

\begin{lemma}[\cite{dias2022minimum}]
\label{lem:walk}
Let $G = (V,E)$ be an \stgraph, and let $W$ be a multiset of edges of $G$, where for every edge $uv \in W$, we denote by $W(uv)$ the number of times the edge appears in the multiset $W$. It holds that the edges of $W$ can be ordered to form an \stwalk passing $W(uv)$ times through each edge $uv \in E$ if and only if the following conditions hold:
\begin{enumerate}
\item For every $v \in V$, $\sum_{uv \in E} W(uv) - \sum_{vw \in E} W(vw) = 
\begin{cases}
-1 & \text{if } v = s, \\
1 & \text{if } v = t, \\
0 & \text{otherwise}.
\end{cases}$
\item For every vertex $v$ appearing in some edge of the multiset $W$, there is an $s$-$v$ path using only edges in $W$ (i.e. $v$ is reachable from $s$ using edges in $W$).
\end{enumerate}
\end{lemma}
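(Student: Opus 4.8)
The plan is to recognize this as the classical characterization of when a directed multigraph admits an Eulerian trail from $s$ to $t$, phrased here with ``connectivity'' replaced by reachability from $s$ (condition~2). I would prove the two implications separately, the forward one by bookkeeping along the walk and the backward one by reducing to the directed Eulerian circuit theorem.

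For the forward direction, suppose the edges of $W$ are ordered as an \stwalk $e_1 e_2 \dots e_q$. Condition~1 follows by a standard accounting argument along this sequence: every occurrence of a vertex $v$ that is neither the first vertex ($s$) nor the last vertex ($t$) consumes exactly one incoming and one outgoing edge; the first step contributes one extra outgoing edge at $s$; and the last step contributes one extra incoming edge at $t$. Summing over all occurrences of $v$ gives $\sum_{uv \in E} W(uv) - \sum_{vw \in E} W(vw)$ equal to $-1$ for $v=s$, $+1$ for $v=t$, and $0$ otherwise. Condition~2 is immediate: any vertex $v$ appearing in an edge of $W$ is visited by the walk, and the prefix of the walk up to the first visit of $v$ is an $s$-$v$ walk using only edges of $W$; shortcutting its cycles yields an $s$-$v$ path using only edges of $W$.

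For the (harder) backward direction, I would reduce to Hierholzer's theorem. First note that condition~1 forces $s$ to have an outgoing edge and $t$ an incoming edge in $W$ (so $s \neq t$ and both appear in $W$). Add a virtual edge from $t$ to $s$ and set $W' := W \cup \{ts\}$; then condition~1 becomes exactly the statement that $W'$ is \emph{balanced}, i.e.\ in-degree equals out-degree at every vertex. Next I would establish connectivity of the support. Let $R$ be the set of vertices reachable from $s$ using edges of $W'$. Since reachability in $W$ implies reachability in $W'$, condition~2 puts every vertex appearing in $W$ into $R$; as $s$ and $t$ appear in $W$, the endpoints of the added edge $ts$ are also in $R$, so $R$ contains every vertex incident to an edge of $W'$. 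Because $R$ is closed under out-edges, no edge of $W'$ leaves $R$; summing in- minus out-degrees over $R$ and using balance then shows no edge of $W'$ enters $R$ either, so every edge of $W'$ has both endpoints in $R$. Hence the multigraph on $R$ with edge multiset $W'$ is balanced and (being reachable from the single vertex $s$) weakly connected, so it has an Eulerian circuit. Rotating that circuit to begin with the edge $ts$ and deleting that first edge leaves an ordering of the edges of $W$ into an \stwalk traversing each edge $uv$ exactly $W(uv)$ times, which is what we want.

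The main obstacle is the connectivity bookkeeping in the backward direction: one must convert condition~2 together with the degree balance into precisely the hypothesis (connectivity of the positive-degree support) needed to invoke Hierholzer's theorem, taking care that the virtual edge $ts$ creates no spurious component and that deleting it from a closed Eulerian trail really yields an $s$-$t$ walk rather than another closed trail. If a self-contained argument is preferred, the same conclusion can be obtained directly by cycle splicing: decompose $W'$ into edge-disjoint closed walks, start from the one through $ts$, and iteratively splice in the remaining closed walks at shared vertices, using the reachability/balance argument above to guarantee that a shared vertex exists at each step until all edges are consumed.
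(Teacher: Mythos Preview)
The paper does not prove this lemma at all; it is quoted verbatim from~\cite{dias2022minimum} and used as a black box in the proof of \Cref{thm:walk-correctness}. So there is no ``paper's own proof'' to compare against.

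Your argument is correct and is the standard one: the forward direction is straightforward bookkeeping, and the backward direction is exactly the reduction to the directed Eulerian circuit theorem via the auxiliary edge $ts$. One small redundancy worth noting: once you establish that $R$ contains every vertex incident to an edge of $W'$ (which you do directly from condition~2 plus the observation that $s,t$ appear in $W$), the subsequent balance/cut argument showing ``no edge of $W'$ leaves or enters $R$'' is unnecessary---both endpoints of every edge are already in $R$. The balance argument is, however, exactly what you would need if you wanted to upgrade ``reachable from $s$'' to ``strongly connected'' in order to invoke the usual hypothesis of Hierholzer's theorem, so it is not misplaced, just slightly misattributed in the flow of the paragraph. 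Either way the conclusion stands.
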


Condition 1 of Lemma \ref{lem:walk} can be directly modeled with the following constraints, for each vertex $v \in V$:
\begin{align}
\sum_{uv \in E} x_{uv,i} - \sum_{vw \in E} x_{vw,i} = 
\begin{cases}
-1 & \text{if } v = s, \\
1 & \text{if } v = t, \\
0 & \text{otherwise},
\end{cases} \quad \forall v \in V \setminus \{s,t\}, \forall i \in \{1,\dots,k\}.\label{eq:walk-1}
\end{align}

To model condition 2 of Lemma \ref{lem:walk}, the main idea from~\cite{dias2022minimum} is to model a \emph{reachability tree} from $s$ to every vertex $v$ appearing in some edge $uv$ such that $x_{uv,i} > 0$, and made up only of edges with positive $x$ values. That is, we want to ensure that if $x_{uv,i} > 0$, then there is a path from $s$ to $v$ using only edges in the tree. 

In this paper we offer a simpler formulation of such a reachability tree than in~\cite{dias2022minimum}, as follows. First, as in~\cite{dias2022minimum}, we introduce binary variables $y_{uv,i} \in \{0,1\}$ with the meaning that $y_{uv,i} = 1$ iff the edge $uv$ is part of this tree. We first state that an edge $uv$ can be in the tree (i.e. $y_{uv,i} = 1$) only if $x_{uv,i} \geq 1$:
\begin{align}
y_{uv,i} &\leq x_{uv,i} \quad \forall uv \in E, \forall i \in \{1,\dots,k\}.\label{eq:walk-2}
\end{align}
Next, we state that if a vertex $v$ appears in some edge $uv$ such that $x_{uv,i} \geq 1$ (call such a vertex \emph{selected}), then there is one, and exactly one, incoming edge $wv$ such that $y_{wv,i} = 1$ (call such an edge \emph{selected}):
\begin{align}
    \sum_{wv \in E} x_{wv,i} &\leq M_1 \sum_{wv \in E} y_{wv,i} \quad \forall v \in V, \forall i \in \{1,\dots,k\},\label{eq:walk-3}\\
    \sum_{wv \in E} y_{wv,i} &= 1 \quad \forall v \in V, \forall i \in \{1,\dots,k\},\label{eq:walk-4}
\end{align}
where $M_1$ is a sufficiently large upper-bound for the sum on the left-hand side of \eqref{eq:walk-3}, for example the in-degree of $v$, which we denote $d^-(v)$, multiplied by the maximum number of times that a walk in the solution can visit an edge. 

At this point, we have that for every selected vertex $v$, there is exactly one incoming selected edge $wv$ (i.e.~such that $y_{wv,i} = 1$). This means that such edges induce a collection of disjoint cycles, with trees ``hanging'' from them, and whose edges are oriented downwards. To ensure that they induce a \emph{single tree}, rooted at $s$, we introduce a \emph{distance-like} variable $d_{v,i} \in \mathbb{Z}_{\geq 0}$ for each vertex $v$ and every walk $i$. We set $d_{s,i} = 0$ and for every other vertex $v$, we have the following constraint:
\begin{align}
d_{v,i} &\geq d_{u,i} + 1 - M_2 (1-y_{uv,i}) \quad \forall uv \in E, \forall i \in \{1,\dots,k\},\label{eq:walk-5}
\end{align}
where $M_2$ is a sufficiently large upper-bound for $d_{v,i}$, e.g., the number of vertices in the graph. The interpretation of this constraint is as follows. If $y_{uv,i} = 0$ (i.e.~the edge $uv$ is not in the tree), then the constraint is vacuous since the right-hand side is is non-positive. However, if $y_{uv,i} = 1$, then the constraint becomes $d_{v,i} \geq d_{u,i} + 1$, i.e. the distance of $v$ is at least one more than that of $u$. This ensures that there are no cycles induced by the edges with $y_{uv,i} = 1$, since otherwise they would induce a cycle of unsatisfiable inequalities. This ensures that they induce a forest, but because of \cref{eq:walk-1}, then they induce a single tree, rooted at $s$. 

The key difference between our model for \stwalks and that of~\cite{dias2022minimum} lies in how the properties enforced by~\cref{eq:walk-5} are modeled. In~\cite{dias2022minimum}, these properties were expressed through a product of the $y$ and $d$ variables. Although such products can be linearized (as we will discuss in \Cref{sec:formulation-problem-objectives}), doing so introduces extra variables and constraints, which in turn slows down the MILP solver. In contrast, our approach avoids this complication, since~\cref{eq:walk-5} is already a linear constraint.

We summarize the correctness of our model in the following theorem, whose proof is in \Cref{apx:missing-proofs}.

\begin{theorem}
    \label{thm:walk-correctness}
    Let $G = (V,E)$ be an \stgraph, and let $x_{uv,i} \in \mathbb{Z}_{\geq 0}$ be variables satisfying \cref{eq:walk-1}. It holds that there exists an $s$-$t$ walk $W_i$ such that $W_i(uv) = x_{uv,i}$ for every edge $uv \in E$, if and only if the MILP model made up of \cref{eq:walk-2,eq:walk-3,eq:walk-4,eq:walk-5} is feasible.
\end{theorem}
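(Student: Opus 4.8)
The plan is to use Lemma~\ref{lem:walk} as the bridge: given variables $x_{uv,i}$ satisfying \cref{eq:walk-1}, define the multiset $W_i$ that takes each edge $uv$ with multiplicity $x_{uv,i}$. Condition~1 of Lemma~\ref{lem:walk} is literally \cref{eq:walk-1} (note that the constraint for $v=s$ and $v=t$ is implied by the constraints for all other vertices together with the fact that the total in/out degree differences sum to zero, so restricting to $v \in V \setminus \{s,t\}$ loses nothing). Hence the only thing to prove is that Condition~2 of Lemma~\ref{lem:walk}---every vertex appearing in an edge of $W_i$ is reachable from $s$ using edges of $W_i$---holds \emph{if and only if} the system \cref{eq:walk-2,eq:walk-3,eq:walk-4,eq:walk-5} has a feasible extension (by $y$ and $d$ variables) for those fixed $x$ values. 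This reduces the theorem to a purely combinatorial statement about when a set of ``selected'' edges admits a spanning in-tree toward $s$.

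For the ``if'' direction (feasibility of the MILP implies reachability): suppose $y,d$ values exist satisfying the four constraints. First argue that the edges with $y_{uv,i}=1$ form a function pointing each selected vertex to a unique parent: \cref{eq:walk-4} gives exactly one selected incoming edge per vertex, and \cref{eq:walk-2} forces each such edge to have $x_{uv,i}\geq 1$, hence it lies in $W_i$; \cref{eq:walk-3} guarantees that every vertex incident to an edge of $W_i$ is selected (so gets a parent). Then use \cref{eq:walk-5} with the monotone potential $d$: along any directed path of $y$-edges the $d$ values strictly increase, so the $y$-edges contain no directed cycle; being a parent function with no cycle, they form a forest oriented toward its roots, and by the flow-balance \cref{eq:walk-1} the only vertex without an incoming $y$-edge is $s$ (every other selected vertex has exactly one), so the forest is a single in-tree rooted at $s$ spanning all selected vertices. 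Reading this tree from $s$ outward gives, for each vertex $v$ appearing in $W_i$, an $s$-$v$ path using only edges of $W_i$, which is Condition~2.

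For the ``only if'' direction: assume the $x$ values satisfy \cref{eq:walk-1} and Condition~2 holds (equivalently, by Lemma~\ref{lem:walk}, there is an actual $s$-$t$ walk $W_i$ realizing the multiplicities). Build a BFS/shortest-path tree inside the subgraph of edges with $x_{uv,i}\geq 1$, rooted at $s$ and spanning exactly the selected vertices (possible precisely because of Condition~2). Set $y_{uv,i}=1$ on the tree edges and $0$ elsewhere, and set $d_{v,i}$ to the depth of $v$ in this tree (and $d_{v,i}=0$, say, for unselected $v$, with $d_{s,i}=0$). One then checks each constraint: \cref{eq:walk-2} holds since tree edges have positive $x$; \cref{eq:walk-4} holds since each selected vertex has exactly one tree-parent and unselected vertices have none---here one must confirm \cref{eq:walk-4} is only meaningfully enforced at selected vertices, or more precisely that $s$ and unselected vertices are handled correctly by the intended reading of the constraint; \cref{eq:walk-3} holds since any vertex with positive incoming $x$ is selected hence has a tree edge; and \cref{eq:walk-5} holds because along a tree edge $uv$ the depths satisfy $d_{v,i}=d_{u,i}+1$, while for non-tree edges the term $-M_2(1-y_{uv,i})$ makes the right side non-positive, so the inequality is vacuous provided $M_2$ is at least the number of vertices (an upper bound on any depth).

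The main obstacle I expect is the careful bookkeeping around the ``root'' vertex $s$ and unselected vertices in constraints \cref{eq:walk-3,eq:walk-4}: \cref{eq:walk-4} as written asks for exactly one incoming $y$-edge at \emph{every} vertex, which cannot literally hold at $s$ (no incoming edges) nor is desirable at unselected vertices, so the statement presumably intends $d_{s,i}=0$ to be a hard-coded exception and \cref{eq:walk-3} to make the constraint vacuous/consistent at unselected vertices (sum on the left is $0$, forcing the $y$-sum to be allowed to be $0$---but \cref{eq:walk-4} says it must be $1$, so in fact unselected vertices still get a spurious incoming $y$-edge, which is harmless since $d$ still certifies acyclicity and flow-balance still pins down the tree structure among selected vertices). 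Nailing down that this spurious structure on unselected vertices does not break the acyclicity/connectivity argument---i.e.\ that the $y$-edges among \emph{selected} vertices still form the desired in-tree regardless of what happens elsewhere---is the delicate point, and I would handle it by restricting attention throughout to the subgraph induced by selected vertices and invoking \cref{eq:walk-1} there.
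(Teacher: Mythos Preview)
Your approach is essentially identical to the paper's: for the ($\Rightarrow$) direction, build a BFS tree in the subgraph of positive-$x$ edges, set $y$ to its indicator and $d$ to BFS depths; for the ($\Leftarrow$) direction, use \cref{eq:walk-5} to rule out directed cycles among the $y$-edges, obtain a forest directed away from its roots, and then use flow conservation \cref{eq:walk-1} to argue that any root must be $s$, so the selected vertices form a single $s$-rooted tree of positive-$x$ edges.

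Your worry about \cref{eq:walk-4} at $s$ and at unselected vertices is well-founded and is not an artifact of your argument: as literally written, $\sum_{wv\in E} y_{wv,i} = 1$ cannot hold at $s$ (which has no incoming edges) nor at any vertex whose incoming $x$-values are all zero (since \cref{eq:walk-2} then forces every incoming $y$ to zero). The paper's own proof does not resolve this either---in the ($\Rightarrow$) direction it only verifies \cref{eq:walk-4} ``for every vertex $v$ in $G'$ (except $s$)''---so the constraint should be read as $\leq 1$ (or as restricted to $v \neq s$ with positive incoming $x$); under that reading both your argument and the paper's go through without change, and your instinct to restrict attention to the subgraph of selected vertices is exactly the right fix.
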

\begin{proof}
    ($\Rightarrow$) Suppose that there exists an $s$-$t$ walk $W_i$ such that $W_i(uv) = x_{uv,i}$ for every edge $uv \in E$. Consider the subgraph $G'$ of $G$ consisting only of edges that are present in $W$, and consider a breath-first search tree $T$ in this graph, rooted as $s$. For every edge $uv \in E$, set $y_{uv,i} = 1$ if $uv$ is also an edge of $T$, and $y_{uv,i} = 0$ otherwise. Moreover, for every vertex $v \in V$, set $d_{v,i} = 0$ if $v = s$ or $v$ is not in $G'$, and otherwise set $d_{v,i}$ as the length of a shortest path from $s$ to $v$ in $G'$ (in terms of number of edges). We claim that these assignments of $y_{uv,i}$ and $d_{v,i}$ variables satisfy \cref{eq:walk-2,eq:walk-3,eq:walk-4,eq:walk-5}:
    \begin{itemize}
        \item \Cref{eq:walk-2} is satisfied because $y_{uv,i} = 1$ only if $uv$ is also an edge of $T$, which implies $x_{uv,i} \geq 1$.
        \item For \Cref{eq:walk-3}, if the sum $\sum_{wv \in E} x_{wv,i}$ is zero, then the condition trivially holds. Otherwise, observe that $\sum_{wv \in E} x_{wv,i} \leq d^-(v) \max_{uv\in E} x_{uv,i}$ and we choose $M_1 \geq d^-(v) \max_{uv\in E} x_{uv,i}$. Since $\sum_{wv \in E} x_{wv,i}$ is not zero, then $v$ is a vertex of $G'$ and thus it has an incoming edge in the breath-first search tree $T$, say $uv$. Since we set $y_{uv,i} = 1$ for the edges of the tree, we have that $\sum_{wv \in E} y_{wv,i} \geq 1$ and thus \Cref{eq:walk-3} holds.
        \item \Cref{eq:walk-4} holds by the fact that $T$ is a tree, and for every vertex $v$ in $G'$ (except $s$) we set $y_{uv,i} = 1$ for exactly one edge $uv$ in-coming to $v$ (i.e. the edge of the tree that is in-coming to $v$).
        \item For \Cref{eq:walk-5}, note that if $y_{uv,i} = 0$, the condition holds, as $M_2$ is chosen larger than the number of the vertices of the graph, and the values of $d_{v,i}$ are distances from $s$, which are at most the number of vertices of the graph. If $y_{uv,i} = 1$, since then $uv$ is an edge on a shortest path from $s$ to $v$, and thus the distance $d_{v,i}$ from $s$ to $v$ is equal the distance $d_{u,i}$ from $s$ to $u$, plus one, and thus \Cref{eq:walk-5} holds.
    \end{itemize}
    ($\Leftarrow$) Suppose that the MILP model made up of \cref{eq:walk-2,eq:walk-3,eq:walk-4,eq:walk-5} is feasible. We need to show that there exists an $s$-$t$ walk $W_i$ such that $W_i(uv) = x_{uv,i}$ for every edge $uv \in E$. Thanks to \Cref{lem:walk}, it suffices to show that for every selected vertex $v$ (i.e.~for which $x_{uv,i} \geq 1$ for some $uv \in E$), there is an $s$-$v$ path using only edges $zw$ having positive $x_{zw,i}$ value. 

    Consider the selected edges $uv$ (i.e.~for which $y_{uv,i} = 1$). By \cref{eq:walk-3,eq:walk-4}, we have that every selected vertex $v$ has exactly one in-coming selected edge. Let $G'$ be the subgraph of $G$ made up of selected edges; we claim that $G'$ has no cycles. Indeed, if there were a cycle made up of selected edges $v_1v_2$, $v_2v_3$, \dots, $v_{t-1}v_{t}$, $v_{t}v_{1}$, then \cref{eq:walk-5} applied to them would imply $v_2 > v_1$, $v_3 > v_2$, \dots, $v_{t} > v_{t-1}$, $v_1 > v_t$, thus $v_1 > v_1$, a contradiction. Thus, $G'$ is a forest, made up of trees directed away from their roots.

    Let $s'$ be an arbitrary root in some tree in $G'$ and note that each tree in the forest contains at least one edge (recall that we created $G'$ from a set of edges). Thus, there is some edge $s'v$ such that $y_{s'v,i} = 1$. By \cref{eq:walk-2}, it follows that $x_{s'v,i} \geq y_{s'v,i} \geq 1$. Since the $x$ variables satisfy \cref{eq:walk-1}, if $s' \neq s$, then there is at least one in-coming edge of $s'$, say $us'$, such that $x_{us'} \geq 1$. Therefore, $s'$ is a selected vertex and therefore $y_{u's'} = 1$ for some in-coming edge $u's'$ of $s'$ (possibly $u' = u$), by \cref{eq:walk-3,eq:walk-4}. Therefore, also the edge $u's'$ is in $G'$, which contradicts the fact that $s'$ is the root of a tree of $G'$. 
    
    Therefore, it follows that $s' = s$, and thus $G'$ is made up of a single tree rooted at $s$, whose edges are directed away from $s$, and which contains all selected vertices (i.e.~those vertices $v$ for which $x_{uv,i} \geq 1$ for some $uv \in E$). Thus, for every selected vertex $v$  we can obtain an $s$-$v$ path as the path in $G'$ from $s$ to $v$. All edges $uw$ on such a path have $y_{uw,i} = 1$, by definition of $G'$, and thus also $x_{uw,i} \geq 1$ (by \cref{eq:walk-2}).
\end{proof}

\subsection{Formulation of problem objectives}
\label{sec:formulation-problem-objectives}

Having a formulation for $k$ \stwalks, encoded as variables $x_{uv,i} \in \mathbb{Z}_{\geq 0}$, for all $uv \in E$, and $i \in \{1,\dots,k\}$, we now show how also the problem objectives from \Cref{def:problems} can be modeled in MILP. 

For each walk $i \in \{1,\dots,k\}$ we introduce a weight variable $w_i \in \mathbb{Z}_{\geq 0}$  (and also a slack variable $\rho_i$ for $k$-MPE). The problem objectives then require modeling the product $x_{uv,i}w_i$ (and also $x_{uv,i}\rho_i$ for $k$-MPE). Even though these are not linear terms, as required by a linear program, they can be linearized using a standard power-of-two technique technique, see~\cite[Remark 7]{dias2022minimum}. For completeness, we present it also here. Suppose we have a product between an integer variable $x$ and an arbitrary variable $y$, and that $\overline{x}$ is an upper-bound for $x$, and let $t = \lfloor\log_2(\overline{x})\rfloor$. Then we can introduce $t+1$ binary variables $x_0,\dots,x_t$ together with the linear constraint $x = \sum_{j=0}^{t}2^jx_j$. Then, we have that $xy = \sum_{j=0}^{t}2^jx_jy$. Each of $x_jy$ is a product between a binary variable $x_j$ and the variable $y$, which can in turn be linearized in a standard manner, as follows. Assuming that $y$ can take values only in the interval $[\underline{y},\overline{y}]$ we introduce a new variable $z_j$ to model this product, together with the constraints:
\begin{align}
    x_j\underline{y} \leq z_j &\leq  x_j\overline{y},\\
    y - \overline{y} (1 - x_j) \leq z_j &\leq y - \underline{y} (1 - x_j).
\end{align}

As such, the objective for $k$-FD is (assuming we further linearize non-linear terms as described above):
\begin{align}
    \sum_{i=1}^{k}x_{uv,i}w_i &= f(uv), \quad \forall uv \in E, \forall i \in \{1,\dots,k\}.
\end{align}
For $k$-LAE we introduce a new error variable $z_{uv} \in \mathbb{Z}_{\geq 0}$ for every edge $uv \in E'$, together with the constraints:
\begin{align}
    - z_{uv} \leq f(uv) - \sum_{i=1}^{k}x_{uv,i}w_i &\leq z_{uv}, \quad \forall uv \in E'.
\end{align}
Then, we require from the MILP to minimize the sum of errors, namely $\displaystyle\min \sum_{uv \in E'} z_{uv}$.

For $k$-MPE (assuming again we handle non-linear terms as above) it suffices to add the constraints:
\begin{align}
    - \sum_{i=1}^{k}x_{uv,i}\rho_i \leq f(uv) - \sum_{i=1}^{k}x_{uv,i}w_i \leq \sum_{i=1}^{k}x_{uv,i}\rho_i, \quad \forall uv \in E',
\end{align}
and require minimizing the sum of slacks $\displaystyle\min \sum_{i=1}^{k} \rho_i$.

\subsection{Formulation of subset constraints}

For general graphs with cycles, we introduce here a weaker form of a set of subpath constraints, but one which can be added to the MILP models developed in the previous subsections, i.e.~can be expressed as further constraints on the $x_{uv,i}$ variables modeling walks.

More specifically, assume that we are also given a family $\mathcal{S} = \{S_1,\dots,S_\ell\}$ of subsets of edges, namely $S_j \subseteq E$, for all $j \in \{1,\dots,\ell\}$. The decomposition problems defined in \Cref{def:problems} are then generalized to additionally require that for every set $S_j$, there is at least one walk $W_i$ such that $W_i$ contains all edges of $S_j$. Formally: \[\forall j \in \{1,\dots,\ell\}, \exists i \in \{1,\dots,k\}, \text{ such that } W_i(uv) \geq 1,\forall uv \in S_j.\]

Note that these constrains just apply to the walks, not to the objectives of the decomposition problems. As such, it suffices to model them on the $x_{uv,i}$ variables and they will be valid for all problems.

We proceed as in \cite{dias2022fast}, with the difference that we now need to handle solution walks and subset constraints (instead of solution paths and subpath constraints). First, we introduce binary variables $p_{uv,i} \in \{0,1\}$ which will indicate whether edge $uv \in E$ is present in walk $i \in \{1,\dots,k\}$, namely $p_{uv,i} = \min(1,x_{uv,i})$. We enforce this equality with the following two constraints:
\begin{align}
    p_{uv,i} &\leq x_{uv,i}, \quad \forall uv \in E, \forall i \in \{1,\dots,k\},\\
    x_{uv,i} &\leq M_3 p_{uv,i}, \quad \forall uv \in E, \forall i \in \{1,\dots,k\},
\end{align}
where $M_3$ is a suitably large constant, for example the maximum number of times an edge can appear in a walk.

For every set $S_j$ and every $i \in \{1,\dots,k\}$, we then introduce a binary indicator variable $s_{i,j} \in \{0,1\}$ which is allowed to equal 1 only if the walk $W_i$ contains all the edges of set $S_j$. To enforce this, we add the constraint:
\begin{align}
    \sum_{uv \in S_j} p_{uv,i} \geq |S_j|s_{i,j}, \quad \forall i \in \{1,\dots,k\}, \forall j \in \{1,\dots,\ell\}.
\end{align}
Finally, we require that every set $S_j$ is present in at least one solution walk $W_i$:
\begin{align}
    \sum_{i=1}^{k} s_{i,j} \geq 1, \quad \forall j \in \{1,\dots,\ell\}.
\end{align}

\section{Missing proofs}
\label{apx:missing-proofs}

This section contains the necessary proofs missing from the main text.

\begin{proof}[Proof of \Cref{lem:st-dom-immediate-relation}]
    Note that $u \neq t$ since $u$ is an $s$-dominator. Let $w:=dom_t(u,k)$ and let $y_1,\dots,y_{k-1}$ be the vertices between $u$ and $v$ in the $s$-dominator tree. We can assume $w \neq v$.

    Observe that $w \neq y_i$ for every $i \in \{1,\dots,k-1\}$. Suppose otherwise. Then $w \neq t$ because $t$ is not an $s$-dominator, and so there are $k-1$ vertices $x_1,\dots,x_{k-1}$ between $w$ and $u$ in the $t$-dominator tree. Some $x_j$ is not between $u$ and $v$ in the $s$-dominator tree, hence there is a $u$-$w$ path avoiding $x_j$. Further, since $x_j$ is strictly $t$-dominated by $w$ there is a $w$-$t$ path avoiding $x_j$. Concatenating these paths at $w$ results in a $u$-$t$ path avoiding $x_j$, a contradiction.

    Suppose now for a contradiction that $v$ is not strictly $t$-dominated by $w$. From the point above we know that there is a $u$-$v$ path avoiding $w$. This path concatenated with a $v$-$t$ path avoiding $w$ gives a $u$-$t$ path also avoiding $w$, contradicting the fact that $w$ $t$-dominates $u$.
    
    It remains to show that $v$ does not $t$-dominate $u$. Suppose otherwise. Since $w$ strictly $t$-dominates $v$ from the point above we have $v=x_i$ for some $i \in \{1,\dots,k-1\}$.
    Thus some $y_j$ is not between $v$ and $u$ in the $t$-dominator tree, implying that there is a $u$-$v$ path avoiding $y_j$. Since $y_j$ is $s$-dominated by $u$ there is an $s$-$u$ path avoiding $y_j$ which when concatenated with the previous path at $u$ gives an $s$-$v$ path also avoiding $y_j$, contradicting the fact that $y_j$ $s$-dominates $v$.
\end{proof}

\begin{proof}[Proof of \Cref{lem:maximal-univocal-paths-disjoint}]
    Since $p_1$ and $p_2$ are distinct we can assume without loss of generality that $p_1$ has a vertex not contained in $p_2$.
    Suppose for a contradiction that there is a vertex $v$ in $p_1$ and $p_2$.
    
    Let $u$ be the vertex in $p_1$ that is not contained in $p_2$ and that is closest to $v$ in $p_1$ (notice that the distance between two vertices of the same univocal path is identical in both trees, as well as  in $G$).
    Then vertex $u$ is either the immediate $s$-dominator of $v$ or it is immediately $s$-dominated by $v$. Suppose that $u$ is the immediate $s$-dominator of $v$, so in particular the vertex of $p_2$ closest to $s$ in the $s$-dominator tree is $v$.
    Since $p_1$ is univocal, $u$ immediately $s$-dominates only $v$ and $v$ immediately $t$-dominates only $u$. Thus, the trivial path $u$ concatenated with $p_2$ is univocal because $p_2$ is univocal, contradicting the maximality of $p_2$. For the remaining case when $u$ is immediately $s$-dominated by $v$, identically we can deduce that $p_2$ concatenated with $u$ is univocal, a contradiction.
\end{proof}

\begin{proof}[Proof of \Cref{thm:safe-sequences-characterization-C}]
$(\Rightarrow)$ Suppose for a contradiction that $X$ is safe and that for every vertex $v \in C$, $X$ is not a subsequence of $\ext{v}$. Then we can build a $C$-walk cover of $G$ by adding to it an $s$-$t$ walk covering each vertex in $C$ while avoiding $X$. This contradicts the $C$-safety of $X$.

$(\Leftarrow)$ If there is a vertex $v \in C$ such that $X$ is a subsequence of $\ext{v}$, then $X$ is safe. To see why, notice that $v$ has to be covered by some $s$-$t$ walk in any $C$-walk cover of $G$. This walk contains the sequence of $s$-dominators of $v$ followed by the sequence of $t$-dominators of $v$ by definition of dominance relation. Thus, $X$ is contained in such a walk and therefore it is safe.
\end{proof}

\begin{proof}[Proof of \Cref{thm:cores-vertices-C}]
    (For brevity we say ``dominator trees'' instead of ``blue-dominator trees'' and ``safe sequence'' instead of ``$C$-safe sequence''.)
    
    $(\Leftarrow)$ If $u$ is a leaf in both dominator trees then $\ext{u}$ is clearly maximal: no other vertex can be added to this sequence by the equivalence of safe sequences and paths in the dominator trees.
    
    $(\Rightarrow)$ Let $T_s$ and $T_t$ denote the $s$- and $t$- dominator trees. Suppose for a contradiction that $\ext{u}$ is a maximal safe sequence but $u$ is not a leaf in both dominator trees. Moreover, suppose without loss of generality that $u$ is not a leaf in $T_s$ and let $v$ be a blue-child of $u$ in $T_s$. Let $k\in\mathbb{N}^+$ be the smallest number such that $dom_s(v,k)=u$, so $u$ is the $k$-th ancestor of $v$ in $T_s$. Notice that $u \neq t$ since $u$ is an $s$-dominator. Let $w:=dom_t(u,k)$ (note that $w$ may or may not be in $C$).
    First we argue that $u$ has a blue-child $z$ in $T_s$ such that $z$ is strictly $t$-dominated by $w$ and $z$ is not a $t$-dominator of $u$, and later argue that $\ext{z}$ contradicts the maximality of $\ext{u}$. We do case analysis on the relation between $v$ and $w$.
    
    If $v \neq w$ then we can take $z = v$: by application of~\Cref{lem:st-dom-immediate-relation} to $u,v,k$ we have that $w$ $t$-dominates $v$ and moreover $v$ is not a $t$-dominator of $u$ since
    $v \neq w$, so $v$ is strictly $t$-dominated by $w$.
    
    Otherwise we have $v=w$. Vertex $v$ has a blue-child different from $u$ in $T_t$ or $u$ has a blue-child distinct from $v$ in $T_s$, otherwise $uv$ is a non-trivial $C$-univocal path, contradicting the fact that the trees are collapsed. So suppose without loss of generality that $u$ has a blue-child $v'$ in $T_s$ distinct from $v$. Let $k' \in\mathbb{N}^+$ be the smallest number such that $dom_s(v',k)=u$, so $u$ is the $k'$-th ancestor of $v'$ in $T_s$. Let $w':=dom_t(u,k')$. Applying~\Cref{lem:st-dom-immediate-relation} to $u,v',k'$ we have that $w'$ $t$-dominates $v'$ and $v'$ is not a $t$-dominator of $u$ unless $v'=w'$. Our goal now is to show that $v' \neq w'$. Suppose for a contradiction that $w' = v'$. We proceed by cases on the relation between $w$ and $w'$.
    
    If $w=w'$ then we have the following equalities $v=w=w'=v'$, which contradicts the fact that $v \neq v'$.
    
    Otherwise we have $w \neq w'$ and hence $w'$ is proper ancestor or proper descendant of $w$ in $T_t$, since they both $t$-dominate $u$. Suppose without loss of generality that $w'$ is a proper descendant of $w$ in $T_t$. Notice now that $v'$ does not $s$-dominate $v$ because $v$ is a blue-child of $u$, and so there is a $u$-$v$ path in $G$ avoiding $v'$; moreover, since $v=w$ and $v'=w'$ and $w'$ is a proper descendant of $w$ in $T_t$, $v'$ is a proper descendant of $v$ in $T_t$ and so there is a $v$-$t$ path in $G$ avoiding $v'$. These two paths concatenated at $v$ result in a $u$-$t$ path avoiding $v'$, contradicting the fact that $v'$ $t$-dominates by $u$. Therefore $v' \neq w'$. Now we can conclude that $w'$ strictly $t$-dominates $v'$. So $u$ has a blue-child $v'$, $w'$ $t$-dominates $u$, and $w'$ strictly $t$-dominates $v'$ and moreover $v'$ is not a $t$-dominator of $u$. We can thus \emph{put} $w=w'$ and take $z=v'$.
    
    We are now in conditions to argue that $\ext{z}$ properly contains $\ext{u}$ as a subsequence, which contradicts the maximality of $\ext{u}$ since $z$ is blue (i.e., $z\in C$) and so $\ext{z}$ is safe. 
    
    The sequence $\ext{u}$ consists of the sequence of $s$-dominators of $u$ (call it $Y_1$) concatenated with the sequence of $t$-dominators of $u$ (call it $Y_2$); analogously, the sequence $\ext{z}$ consists of the sequence of $s$-dominators of $z$ (call it $X_1$) concatenated with the sequence of $t$-dominators of $z$ (call it $X_2$).
    Clearly, since $z$ is a blue-child of $u$ in $T_s$, the sequence $Y_1$ is a proper subsequence of $X_1$.
    Moreover, since $z$ is not a $t$-dominator of $u$, either $u$ $t$-dominates $z$ or none $t$-dominates the other.
    In the former case $Y_2$ clearly is a proper subsequence of $X_2$ and thus $\ext{z}$ properly contains $\ext{u}$.
    In the latter case, if we show that $Y_2(u,w)$ is a subsequence of $X_1(u,z)$ we are done since $X_2$ contains $Y_2[w,t]$.\footnote{If $W$ is a sequence of dominators in a dominator tree then $W[a,b]$ denotes the subsequence of $W$ in between $a$ and $b$ and likewise for $W(a,b)$ except that $a$ and $b$ are removed.} First we argue that every vertex in $Y_2(u,w)$ is in $X_1(u,z)$. Suppose otherwise and let $x$ be such a vertex. Then there is a $u$-$z$ path avoiding $x$. Further, since $x \in Y_2(u,w)$ and $w$ is the lowest common ancestor between $u$ and $z$ in $T_t$, there is a $z$-$t$ path through $w$ avoiding $x$, which if concatenated at $z$ with the previous path yields a $u$-$t$ path avoiding $x$, contradicting the fact that $x$ $t$-dominates $u$.
    We are left to argue that the order is preserved. For that, let $x_1,\dots,x_\ell$ denote the sequence $Y_2(u,w)$ ($\ell \leq k-1$ by definition of $w$). Suppose for a contradiction that $x_1,\dots,x_\ell$ is not a subsequence of $X_1(u,z)$. Thus, with respect to $X_1(u,z)$, let $x_i$ denote the first vertex not matching the order of $x_1,\dots,x_\ell$ and let $x_j$ denote the vertex that should be in the position of $x_i$, so $i>j$. So $T_s$ witnesses that there is a $u$-$x_i$ path avoiding $x_j$ and $T_t$ witnesses that there is an $x_i$-$t$ path avoiding $x_j$ (as $i>j$), so $G$ has a $u$-$t$ path avoiding $x_j$, a contradiction to the fact that $x_j$ $t$-dominates $u$. This concludes the proof.
\end{proof}

\begin{proof}[Proof of \Cref{thm:representation-optimal-enumeration-C}]
    Build the dominator trees of $G$. From these, build the blue-dominator trees of $G$. Then collapse $C$-univocal paths (recall that we must store the ordered sequence of vertices of the collapsed path in the corresponding resulting node in one of the dominator trees). These three steps take $O(n+m)$ time, where $n$ denotes the number of vertices of $G$.
    
    For every vertex $v$ that is a leaf in both dominator trees, output $\ext{v}$. By \Cref{thm:cores-vertices-C} we find every maximal safe sequence during this process. Further, no safe sequence is computed twice since different leaves common to both trees have different extensions (as they differ in at least the vertex being extended). Therefore the algorithm runs in time equal to the length of all the maximal $C$-safe sequences plus the time required for preprocessing described above, so $O(m+o)$ altogether.
\end{proof}

\begin{figure}[h]
    \begin{subfigure}[m]{0.45\textwidth}
        \centering
        \includegraphics[width=0.8\textwidth]{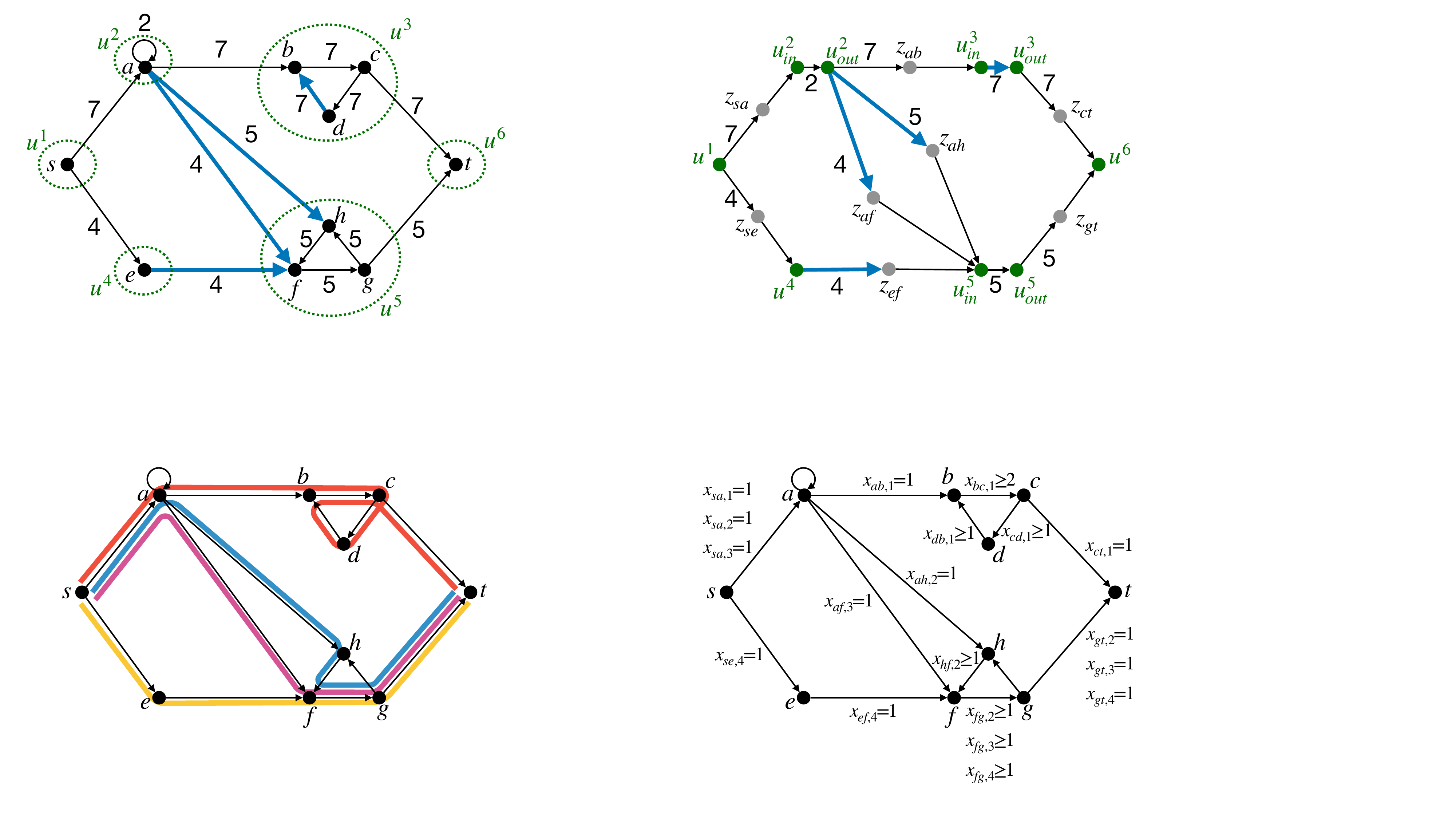}    
        \caption{An edge-weighted graph $G$, with its strongly connected components (SCC) circled in green, and labeled $u^1,\dots,u^6$.\label{fig:antichian-1}}
    \end{subfigure}
    \hfill
    \begin{subfigure}[m]{0.45\textwidth}
        \centering
        \includegraphics[width=0.8\textwidth]{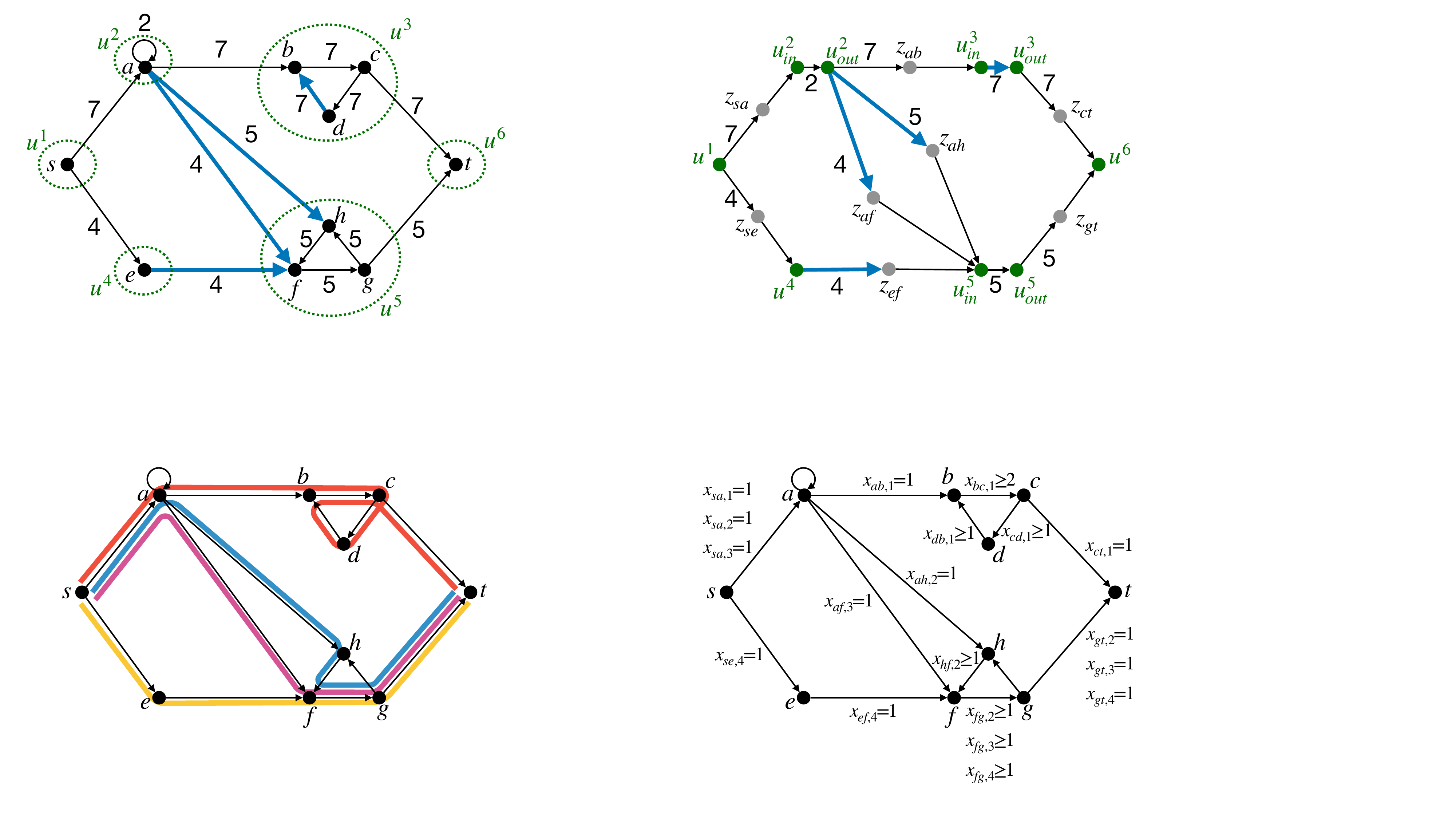}    
        \caption{The graph $G'$ from \Cref{thm:antichain-reduction}, with the correspondence between the SCCs of $G$ to the vertex gadgets of $G'$ also in green.}
    \end{subfigure}
    \caption{The four blue edges form a maximum-weight edge antichain of $G$, of weight 20. In $G'$, every non-trivial SCC $u^i$ of $G$ (i.e.~with at least one edge) is replaced by a gadget made up of a single edge $u^i_{in}u^i_{out}$ with weight equaling the maximum weight of an edge in that SCC. Every trivial SCC of $G$ (i.e. with no edges) is kept as a single vertex. All edges between SCCs are replaced by a path of length 2, with a private vertex (in gray) for every edge; the first edge of this length-2 path has the same weight as the original edge, and the second edge of this length-2 path has weight 0 (not shown). The maximum-weight edge antichain of $G'$, corresponding to the one shown for $G$, is also shown in blue.
    \label{fig:antichain-reduction}}
\end{figure}

\begin{proof}[Proof of \Cref{thm:antichain-reduction}]
    We build the following reduction (see also \Cref{fig:antichain-reduction} for an illustration). From $G$, we compute the acyclic graph $G^{scc} = (V^{scc},E^{scc})$ of strongly connected components (SCCs) of $G$ (the \emph{condensation} of $G$), in time $O(|V| + |E|)$. In this graph, the strongly connected components correspond to the vertices $V^{scc}$, and two vertices $v'$ and $u'$ in $V^{scc}$ are joined by an edge if there is an edge between a vertex in the SCC represented by $v'$ to a vertex in the SCC represented by $u'$. Moreover, we say that an SCC (or a vertex in $G^{scc}$) is non-trivial if the SCC contains at least one edge (note that a non-trivial SCC can also be made up of a single vertex with a self-loop). We build the reduction graph $G'$ from $G^{scc}$ as follows. If $v'$ is a non-trivial SCC, we represent $v'$ as an edge $v'_{in}v'_{out}$ in $G'$. In $G'$ we set the weight $w'(v'_{in}v'_{out})$ as the maximum weight of an edge in the SCC represented by $v'$. Otherwise, we simply add $v'$ to $G'$. For uniformity of future notation, in such a case we let $v'_{in} := v'$ and $v'_{out} := v'$. For every edge $uv$ in $G$ such that $u$ belongs to some SCC $u'$ and $v$ belongs to a different SCC $v'$, we add the 3-vertex path $u'_{out}z_{uv}v'_{in}$, where $z_{uv}$ is a new vertex private to the edge $uv$, which we add to avoid the creation of parallel edges. In $G'$, we set $w'(u'_{out}z_{uv}) = w(uv)$, and $w'(z_{uv}v'_{in}) = 0$; namely, the weight of $uv$ is carried over to the weight of the first edge of the newly created path in $G'$. The graph $G'$ can be computed in linear time, since $G^{scc}$ is computable in linear time~\cite{tarjan1972depth}, and the construction of $G'$ involves only a linear pass over $V$ and $E$. Note that the number of vertices of $G'$ is at most $2|V|$ and the number of edges is at most $|E|$.

    To prove the correctness of the reduction, we prove first that for any antichain $A$ of $G$, there exists an antichain $A'$ of $G'$ whose weight is at least the weight of $A$. Let $A'$ be a set of edges of $G'$ obtained from $A$ as follows:
    \begin{itemize}
        \item If $e \in A$ is an edge inside an SCC $v'$, then $v'$ is a non-trivial SCC and we take in $A'$ the edge $v'_{in}v'_{out}$. Since $w'(v'_{in}v'_{out})$ was chosen as the maximum weight of an edge in the SCC represented by $v'$, we have that $w'(v'_{in}v'_{out}) \geq w(e)$.
        \item Otherwise, $e = uv \in A$ is an edge between two SCCs, say $u'$ and $v'$, and we take in $A'$ the first edge of the corresponding 3-vertex path between $u'$ and $v'$, namely $u'_{out}z_{uv}$. By construction, its weight satisfies $w'(u'_{out}z_{uv}) = w(uv) = w(e)$. 
    \end{itemize} 
    By the above construction, we have that that the weight of $A'$ is at least the weight of $A$. It remains to show that $A'$ is an antichain in $G'$. Suppose for a contradiction this is not the case, and suppose that some edge $e_i' \in A'$ (created from an edge $e_i \in A$) reaches some edge $e_j' \in A'$ (created from an edge $e_j \in A$). However, by construction, we have that also $e_i$ reaches $e_j$, which contradicts the fact that $A$ is an antichain.
    
    Second, it remains to prove that for any antichain $A'$ of $G$, there exists an antichain $A$ of $G$ whose weight is at least that of $A'$. As above, from $A'$ we construct $A$ in a symmetric manner. First, we remove from $A'$ all edges whose weight is 0, as they do not change the weight of $A$. Next, we construct $A$ as follows.
    \begin{itemize}
        \item If $e' \in A'$ is an edge of the type $e' = v'_{in}v'_{out}$, then we include in $A$ an arbitrary edge $e$ inside the SCC corresponding to $v'$ of maximum weight inside that SCC. By construction, we have that $w(e) = w'(v'_{in}v'_{out}) = w'(e')$.
        \item If $e'$ is an edge of the type $e' = u'_{out}z_{uv}$, then we include in $A$ the edge $uv$. By construction, we have that $w(uv) = w'(u'_{out}z_{uv}) = w'(e')$.
    \end{itemize}
    Note that edges of $G'$ of the type $z_{uv}v'_{in}$ have weight 0, so they cannot belong to $A'$ anymore. Thus, the weight of $A$ equals the weight of $A'$. To see that $A$ is an antichain in $G$, note that if edges $e_i$ and $e_j$ of $A$ are such that there is a path from $v_i$ to $u_j$ in $G$, then also by construction there is a path in $G'$ between the corresponding edges $e_i'$ and $e_j'$, which contradicts the fact that $A'$ is an antichain in $G'$.
\end{proof}

\newpage
\section{Additional experimental results}
\label{sec:additional-experiments}

\begin{table}[h!]
\caption{\textbf{Experimental results on imperfect data with \ILPLAE.} The header is as in \Cref{tab:mfd}.\label{tab:lae}}
\centering
\scalebox{0.85}{
\begin{tabular}{|r|r|r|r|r|r|r|r|r|}
\hline
& \multirow{2}{*}{\#gen} 
& \multirow{2}{*}{\#graphs} 
& \multirow{2}{*}{\shortstack{avg $n$\\(max $n$)}} 
& \multirow{2}{*}{\shortstack{avg $m$\\(max $m$)}} 
& \multirow{2}{*}{prep (s)} 
& \multicolumn{2}{c|}{\#solved, time (s)} 
& \multirow{2}{*}{speed-up $(\times$)} \\ \cline{7-8}
& & & & & & no safety & safety & \\ \hline

\multirow{7}{*}{\rotatebox{90}{\footnotesize\textbf{complex32}}}
& 5 & 63 & 26 (144) & 37 (194) & 0.012 & 45, 44.66 & 59, 12.93 & 95.8 \\
& 10 & 162 & 28 (151) & 44 (209) & 0.020 & 106, 69.09 & 152, 14.17 & 249.3 \\
& 15 & 114 & 33 (156) & 54 (219) & 0.031 & 26, 185.00 & 105, 15.17 & 560.2 \\
& 20 & 116 & 42 (226) & 69 (318) & 0.046 & 4, 191.93 & 107, 17.39 & 411.2 \\
& 25 & 118 & 70 (284) & 111 (406) & 0.074 & 3, 144.66 & 91, 17.72 & 168.8 \\
& 30 & 122 & 77 (289) & 124 (416) & 0.100 & 2, 145.15 & 93, 27.22 & 102.1 \\
& 32 & 124 & 80 (291) & 131 (420) & 0.110 & 2, 226.23 & 90, 28.33 & 97.6 \\

\hline

\multirow{6}{*}{\rotatebox{90}{\footnotesize\textbf{ecoli}}}
& 5 & 127 & 11 (108) & 17 (157) & 0.006 & 104, 27.47 & 126, 5.48 & 111.1 \\
& 10 & 209 & 30 (249) & 46 (370) & 0.014 & 53, 135.49 & 169, 17.74 & 401.5 \\
& 15 & 271 & 53 (334) & 82 (502) & 0.031 & 4, 137.49 & 195, 31.57 & 364.4 \\
& 20 & 369 & 84 (398) & 128 (614) & 0.058 & 1, 9.37 & 236, 38.81 & 185.3 \\
& 25 & 436 & 99 (438) & 151 (672) & 0.070 & 0, - & 180, 75.12 & 28.6 \\
& 30 & 487 & 139 (466) & 210 (721) & 0.106 & 0, - & 163, 68.33 & 21.9 \\ 

\hline

\multirow{5}{*}{\rotatebox{90}{\footnotesize\textbf{JGI}}}
& 5 & 47 & 16 (86) & 25 (127) & 0.007 & 29, 63.04 & 45, 9.13 & 123.8 \\
& 10 & 41 & 34 (122) & 52 (175) & 0.018 & 3, 112.00 & 33, 16.05 & 387.6 \\
& 15 & 44 & 41 (132) & 66 (194) & 0.033 & 0, - & 35, 36.96 & 259.5 \\
& 20 & 42 & 53 (151) & 87 (224) & 0.057 & 1, 9.79 & 34, 38.19 & 111.3 \\
& 26 & 38 & 67 (181) & 111 (272) & 0.089 & 1, 92.37 & 27, 50.20 & 47.1 \\

\hline

\multirow{4}{*}{\rotatebox{90}{\footnotesize\textbf{medium20}}}
& 5 & 52 & 16 (75) & 24 (108) & 0.007 & 36, 43.28 & 49, 10.15 & 113.4 \\
& 10 & 39 & 26 (86) & 42 (131) & 0.014 & 7, 159.33 & 32, 19.05 & 450.8 \\
& 15 & 40 & 36 (113) & 60 (170) & 0.028 & 1, 247.94 & 31, 37.66 & 262.5 \\
& 20 & 40 & 50 (125) & 84 (200) & 0.069 & 0, - & 25, 49.29 & 159.3 \\

\hline
\end{tabular}}
\end{table}

\begin{table}[h!]
\caption{\textbf{Experimental results on imperfect data with \ILPRobust.} The header is as in \Cref{tab:mfd}.\label{tab:mpe}}
\centering
\scalebox{0.85}{
\begin{tabular}{|r|r|r|r|r|r|r|r|r|}
\hline
& \multirow{2}{*}{\#gen} 
& \multirow{2}{*}{\#graphs} 
& \multirow{2}{*}{\shortstack{avg $n$\\(max $n$)}} 
& \multirow{2}{*}{\shortstack{avg $m$\\(max $m$)}} 
& \multirow{2}{*}{prep (s)} 
& \multicolumn{2}{c|}{\#solved, time (s)} 
& \multirow{2}{*}{speed-up $(\times$)} \\ \cline{7-8}
& & & & & & no safety & safety & \\ \hline

\multirow{7}{*}{\rotatebox{90}{\footnotesize\textbf{complex32}}}
& 5 & 63 & 26 (144) & 37 (194) & 0.012 & 56, 15.81 & 63, 0.61 & 143.0 \\
& 10 & 162 & 28 (151) & 44 (209) & 0.017 & 41, 120.60 & 156, 2.85 & 1465.6 \\
& 15 & 114 & 33 (156) & 54 (219) & 0.029 & 6, 128.93 & 109, 4.09 & 965.3 \\
& 20 & 116 & 42 (226) & 69 (318) & 0.048 & 2, 193.06 & 112, 8.32 & 619.5 \\
& 25 & 118 & 70 (284) & 111 (406) & 0.085 & 0, - & 110, 14.02 & 277.5 \\
& 30 & 122 & 77 (289) & 124 (416) & 0.116 & 0, - & 110, 19.70 & 189.5 \\
& 32 & 124 & 80 (291) & 131 (420) & 0.122 & 0, - & 106, 18.28 & 173.0 \\

\hline

\multirow{6}{*}{\rotatebox{90}{\footnotesize\textbf{ecoli}}}
& 5 & 127 & 11 (108) & 17 (157) & 0.005 & 118, 14.85 & 127, 0.53 & 126.1 \\
& 10 & 209 & 30 (249) & 46 (370) & 0.017 & 21, 131.66 & 207, 4.07 & 1228.5 \\
& 15 & 271 & 53 (334) & 82 (502) & 0.039 & 1, 250.74 & 260, 7.25 & 694.4 \\
& 20 & 369 & 84 (398) & 128 (614) & 0.072 & 0, - & 336, 17.70 & 361.4 \\
& 25 & 436 & 99 (438) & 151 (672) & 0.094 & 0, - & 377, 19.75 & 177.4 \\
& 30 & 487 & 139 (466) & 210 (721) & 0.148 & 0, - & 390, 23.22 & 127.4 \\

\hline

\multirow{5}{*}{\rotatebox{90}{\footnotesize\textbf{JGI}}}
& 5 & 47 & 16 (86) & 25 (127) & 0.010 & 33, 32.78 & 46, 1.10 & 330.4 \\
& 10 & 41 & 34 (122) & 52 (175) & 0.021 & 0, - & 39, 6.79 & 840.5 \\
& 15 & 44 & 41 (132) & 66 (194) & 0.043 & 0, - & 39, 2.43 & 487.7 \\
& 20 & 42 & 53 (151) & 87 (224) & 0.057 & 0, - & 39, 5.91 & 277.6 \\
& 26 & 38 & 67 (181) & 111 (272) & 0.102 & 0, - & 33, 7.50 & 174.2 \\

\hline

\multirow{4}{*}{\rotatebox{90}{\footnotesize\textbf{medium20}}}
& 5 & 52 & 16 (75) & 24 (108) & 0.007 & 42, 22.17 & 52, 1.38 & 170.4 \\
& 10 & 39 & 26 (86) & 42 (131) & 0.019 & 0, - & 39, 9.27 & 1129.3 \\
& 15 & 40 & 36 (113) & 60 (170) & 0.033 & 0, - & 39, 5.32 & 537.0 \\
& 20 & 40 & 50 (125) & 84 (200) & 0.052 & 0, - & 37, 18.03 & 273.7 \\

\hline
\end{tabular}}
\end{table}

\newpage
\section{Flowpaths example code}
\label{sec:flowpaths}

The models in this paper were implemented in the \flowpaths Python package version 0.2.11, which is installable as:

\begin{lstlisting}
pip install flowpaths
\end{lstlisting}

Below, we give minimal usage examples of our implementation, to illustrate their use in potential applications employing these models. Note that the decomposition models in \flowpaths accept graph object created using the popular NetworkX Pyton library. Moreover, the \flowpaths package uses the \texttt{graphviz} Python package to visualize graphs, which can be useful in applications. We also show these visualizations for the code fragments below. The complete documentation and further examples of the classes implementing our models is available in the \flowpaths documentation, at \url{algbio.github.io/flowpaths}.

\subsection{Minimum Flow Decomposition}

\begin{lstlisting}
import flowpaths as fp
import networkx as nx

# We create a NetworkX graph with a flow value on each edge
graph = nx.DiGraph()
graph.add_edge('s', 'a', flow=4)
graph.add_edge('a', 'a', flow=1)
graph.add_edge('a', 'b', flow=3)
graph.add_edge('a', 'c', flow=1)
graph.add_edge('b', 'c', flow=6)
graph.add_edge('c', 'd', flow=8)
graph.add_edge('d', 'e', flow=1)
graph.add_edge('d', 'f', flow=6)
graph.add_edge('d', 't', flow=1)
graph.add_edge('e', 'c', flow=1)
graph.add_edge('f', 'b', flow=3)
graph.add_edge('f', 't', flow=3)

# We visualize the input graph
fp.utils.draw(
    G = graph,
    filename = 'input-mfd.pdf',
    flow_attr = 'flow',
    draw_options={'show_edge_weights': True})
\end{lstlisting}

The input graph rendered as a PDF file by the above code is the following:
\vspace{-0.9cm}
\begin{figure}[h!]
    \centering
    \includegraphics[width=0.8\linewidth]{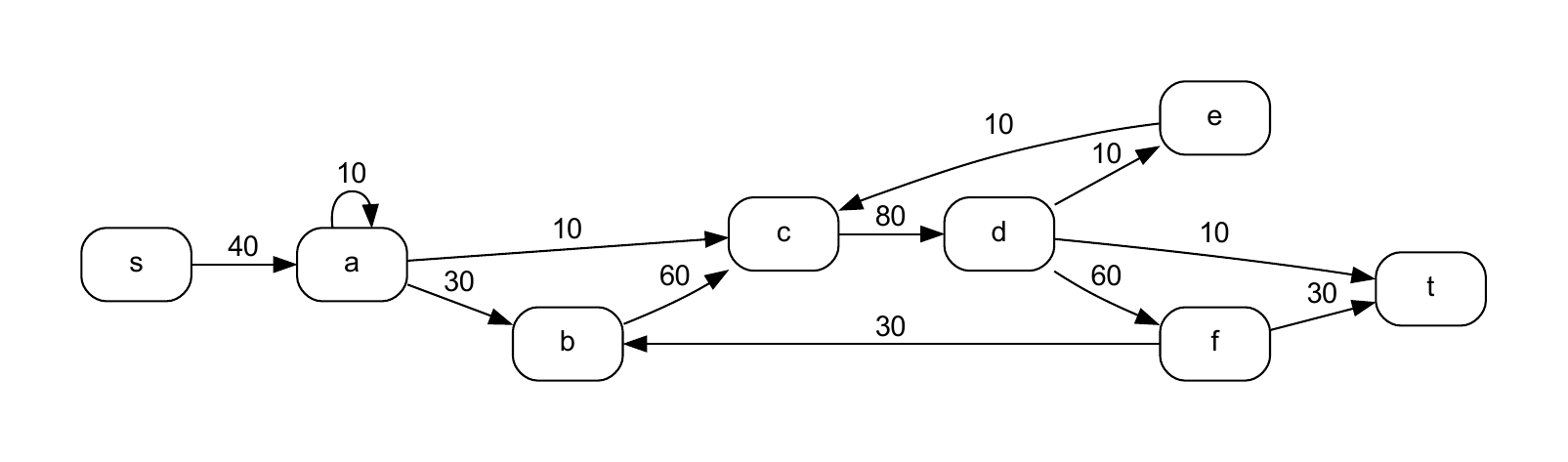}
\end{figure}
\vspace{-0.7cm}

Next, we proceed to creating and solving the model:
\begin{lstlisting}
# We create the model
model = fp.MinFlowDecompCycles(
    G = graph,
    flow_attr = 'flow')

# We solve the model
model.solve()

# If the model was solved, we retrieve the solution
if model.is_solved():
    solution = model.get_solution()
    print('Solution walks:', solution['walks'])
    print('Solution weights:', solution['weights'])
    fp.utils.draw( # We visualize the solution
        G = graph,
        filename = 'solution-mfd.pdf',
        paths=solution['walks'],
        weights=solution['weights'])
\end{lstlisting}

The solution walks visualized by the above code are the blue and red walks in the figure below, of weights 10 and 30, respectively. With the default drawing options, the weight is printed on the first edge of the walk. Note also that e.g. the blue walk traverses the edge $(c,d)$ two times.

\vspace{-0.9cm}
\begin{figure}[h!]
    \centering
    \includegraphics[width=0.8\linewidth]{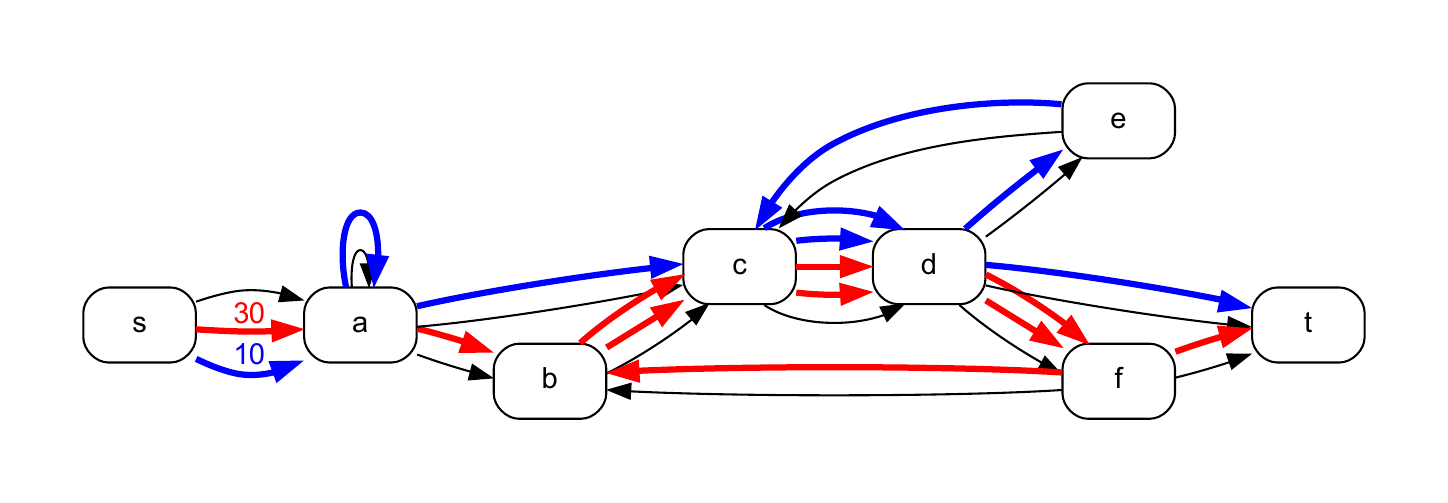}
\end{figure}
\vspace{-0.5cm}

\subsection{Least Absolute Errors}

For the \ILPLAE model, we illustrate the use of both subset constraints, and of enforcing the objective from \Cref{def:problems} on a proper subset $E' \subsetneq E$ of edges. 

Assume that the input graph is the one illustrated below, where there are two subset constraints \texttt{[('s', 'a'), ('a', 'c'), ('d', 't')]} and \texttt{[('f', 'b'), ('b', 'c'), ('d', 'f')]}, visualized as dotted colored edges below.





\vspace{-0.9cm}
\begin{figure}[h!]
    \centering
    \includegraphics[width=0.8\linewidth]{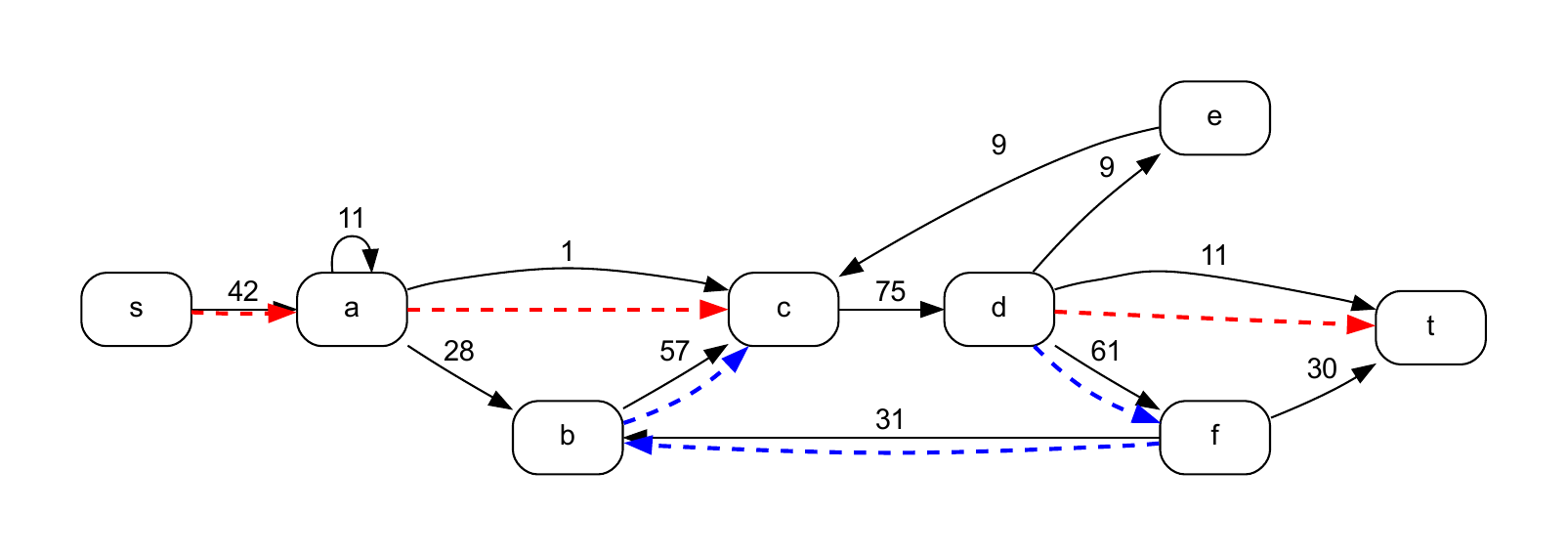}
\end{figure}
\vspace{-0.7cm}

We can pass such subset constraints via the parameter \texttt{subset\_constraints}. Moreover, since the flow value on edge \texttt{('a', 'c')} is much smaller than the other values, one could remove it from the set $E'$. This is achieved by passing such edges to the parameter \texttt{elements\_to\_ignore}.

\begin{lstlisting}
# We create the model
model = fp.kLeastAbsErrorsCycles(   
    G = graph,
    weight_type = int,
    flow_attr = 'flow',
    subset_constraints = [
        [('s', 'a'), ('a', 'c'), ('d', 't')],
        [('b', 'c'), ('d', 'f')]
    ],
    elements_to_ignore = [('a', 'c')])

# We solve the model
model.solve()

# If the model was solved, we retrieve the solution
if model.is_solved():
    solution = model.get_solution()
    print('Solution walks:', solution['walks'])
    print('Solution weights:', solution['weights'])
\end{lstlisting}

The solution walks are visualized in the PDF figure below.

\vspace{-0.9cm}
\begin{figure}[h!]
    \centering
    \includegraphics[width=0.8\linewidth]{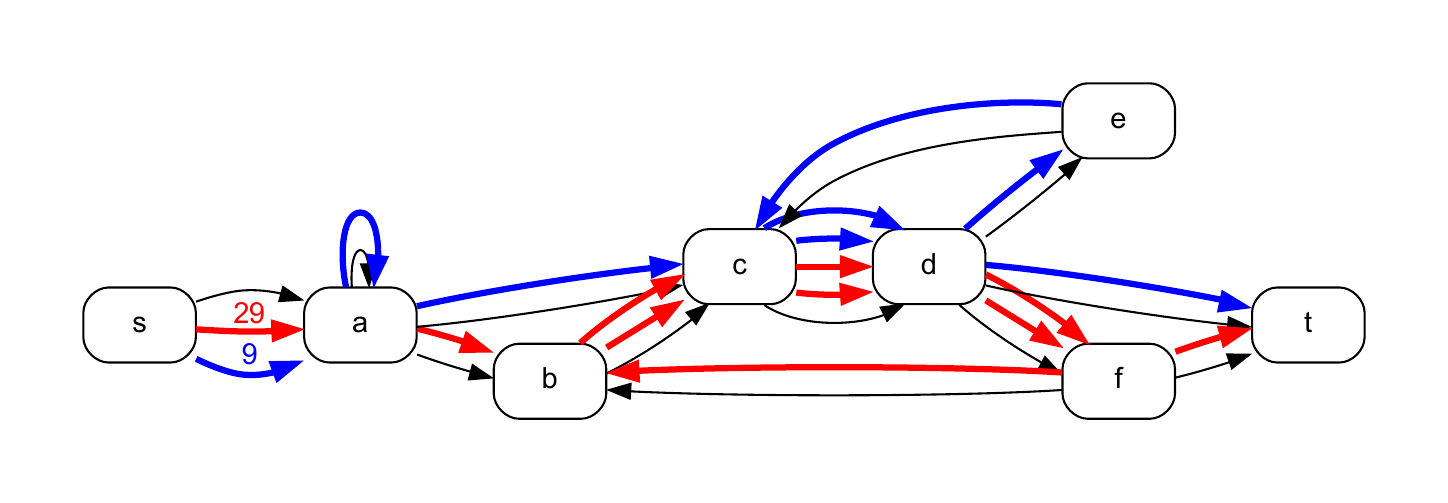}
\end{figure}
\vspace{-0.5cm}

\end{document}